\newtheorem{thm}{Theorem}
\newtheorem{claim}{Claim}[thm]
\newtheorem{lem}[thm]{Lemma}
\newtheorem{proposition}[thm]{Proposition}
\newcommand{\RE}{\mathbb{R}}			
\newcommand{\GG}{\mathcal{G}}
\newcommand{\BB}{\mathcal{B}}
\newcommand{\NN}{\mathcal{N}}
\title{The Cost of Perfection for Matchings in Graphs}
\author{
Emilio Vital Brazil\thanks{Department of Computer Science, University of Calgary, Canada.}
\and
Guilherme D. da Fonseca\thanks{LIRMM, Universit\'e Montpellier 2, France.}
\and
Celina M. H. de Figueiredo\thanks{COPPE, Univ.~Federal do Rio de Janeiro, Brazil.}
\and
Diana Sasaki\footnotemark[3]
}
\begin{document}

\maketitle

\begin{abstract}
Perfect matchings and maximum weight matchings are two fundamental combinatorial structures. 
We consider the ratio between the maximum weight of a perfect matching and the maximum weight of a general matching. 
Motivated by the computer graphics application in triangle meshes, where we seek to convert a triangulation into a quadrangulation by merging pairs of adjacent triangles, we focus mainly on bridgeless cubic graphs.
	
First, we characterize graphs that attain the extreme ratios. 
Second, we present a lower bound for all bridgeless cubic graphs.
Third, we present upper bounds for subclasses of bridgeless cubic graphs, most of which are shown to be tight.
Additionally, we present tight bounds for the class of regular bipartite graphs.
\end{abstract}

\section{Introduction}
\label{Introduction}
The study of matchings in cubic graphs has a long history in combinatorics, dating back to Petersen's theorem~\cite{konig}. 
Recently, the problem has found several applications in computer graphics and geographic information systems~\cite{biedl01,gopi04,remacle11,daniels11}.
Before presenting the contributions of this paper, we consider the following motivating example in the area of computer graphics.

Triangle meshes are often used to model solid objects. 
Nevertheless, quadrangulations are more appropriate than triangulations for some applications~\cite{daniels11, tri2quad}. 
In such situations, we can convert a triangulation into a quadrangulation by merging pairs of adjacent triangles (Figure~\ref{fig:bunny}). 
Hence, the problem can be modeled as a matching problem by considering the dual graph of the triangulation, where each triangle corresponds to a vertex and edges exist between adjacent triangles. 
The dual graph of a triangle mesh is a bridgeless cubic graph, for which Petersen's theorem guarantees that a perfect matching always exists~\cite{biedl01,bm}. 
Also, such a matching can be computed in $O(n \log^2 n)$ time~\cite{diks10}.

\begin{figure}[t]
\centering
\includegraphics[width = .7\textwidth]{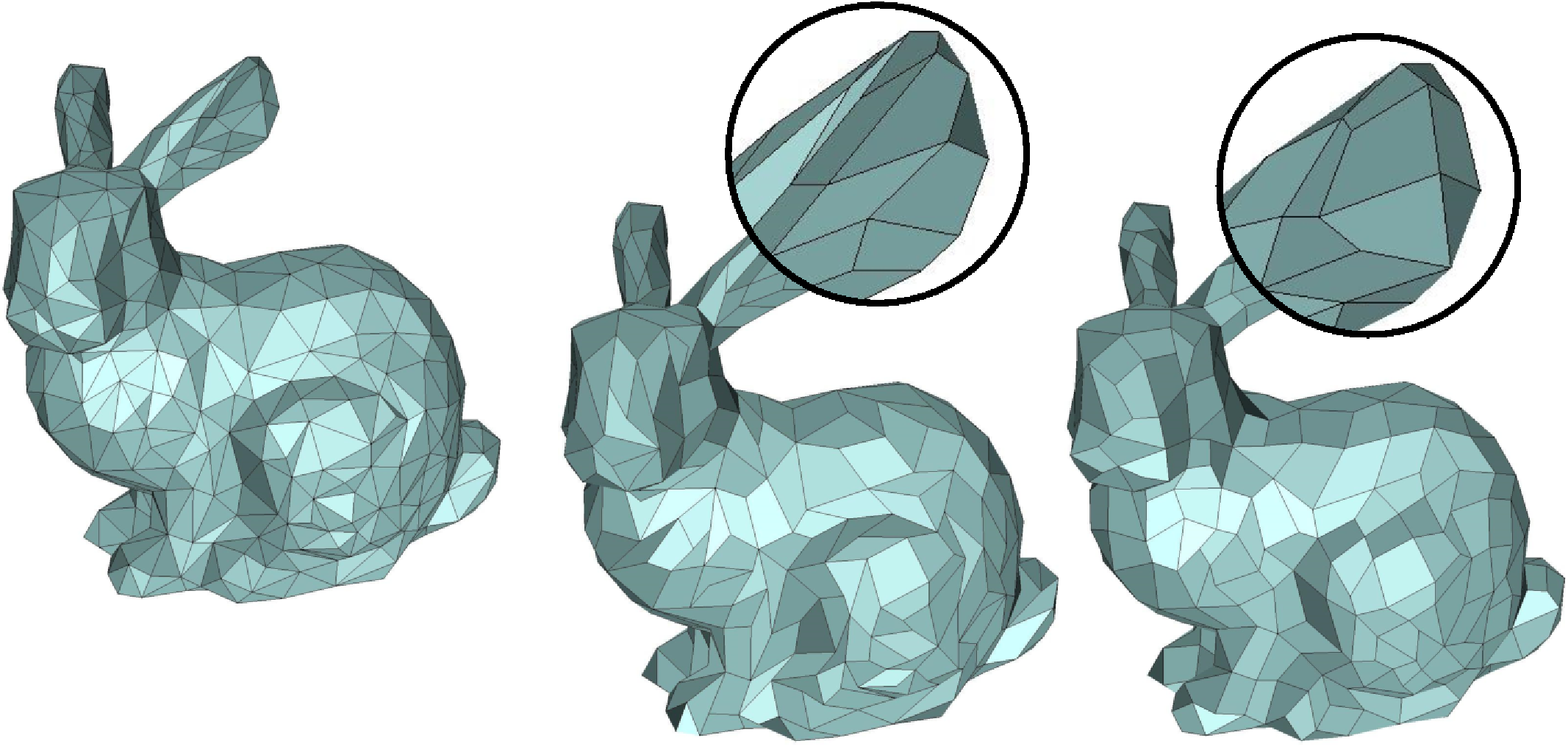}
\caption{Stanford Bunny Model: triangular mesh (left) and two quadrangular meshes.}
\label{fig:bunny}
\end{figure}

Unfortunately, from the computer graphics perspective, some pairs of triangles lead to undesirable quadrilaterals (for example, when the triangles are skinny or lie on very different planes). 
A natural extension to the cubic graph model assigns a weight to each edge (i.e., to each pair of adjacent triangles), which expresses how desirable the corresponding quadrilateral is. 
In Figure~\ref{fig:bunny} (middle and right) we can compare the results when two different weight functions are used to create  quadrangular meshes, observe that the middle one has more skinny quadrilaterals than the right one.
However, even when using good weight functions, an inherent difficulty arises: The maximum weight matching may not be a perfect matching.

In this paper, we study the relationship between these two types of matchings, in order to understand how much worse (in terms of total weight) we do by selecting the maximum weight perfect matching instead of the maximum weight matching. The interest of such study goes beyond the original computer graphics application, raising intriguing theoretical questions.

We provide bounds for the ratio between the maximum weight of a perfect matching and the maximum weight of a matching. We take advantage of the existing rich literature about bridgeless cubic graphs, a historical graph class much studied in the context of important graph theory conjectures, such as: The Four Color Conjecture~\cite{Appel}, the Berge-Fulkerson Conjecture, and the Cycle Double Cover Conjecture~\cite{Celmins}. We formalize the aforementioned concepts in the next paragraphs, after some definitions.

Let $G=(V,E)$ be a connected undirected graph. A \emph{bridge} is an edge $uv \in E$ such that all paths between $u$ and $v$ go through $uv$. A graph is \emph{bridgeless} if it has no bridges. A graph is \emph{$\Delta$-regular} if every vertex has degree exactly $\Delta$. A $3$-regular graph is called a \emph{cubic} graph. A cubic graph is bridgeless if and only if it is biconnected~\cite{bm}.

A \emph{matching} in $G$ is a set $M \subseteq E$ such that no two edges in $M$ share a common vertex. Recall that given a matching $M$ in a graph $G$, we say that $M$ \emph{saturates} a vertex $v$ and that vertex $v$ is \emph{$M$-saturated}, if some edge of $M$ is incident to $v$~\cite{bm}.
A matching $P$ is \emph{perfect} if $|P| = |V|/2$. 
A matching is \emph{maximal} if it is not a subset of any other matching and is \emph{maximum} if it has maximum cardinality. 
 A cubic graph $G$ is \emph{Tait-colorable} if the edges of $G$ can be
partitioned into three perfect matchings, all Tait-colorable graphs
are bridgeless~\cite{bm}. A \emph{snark} is a bridgeless cubic graph
that is not Tait-colorable and the smallest snark is the Petersen graph~\cite{isaacs75}.

Let $w : E \rightarrow \RE^+$ be the \emph{weight} of the edges. It will be convenient to allow for the weight of some edges to be zero as long as there is at least one edge with nonzero weight. Given a subset $E' \subseteq E$, we refer to the quantity $w(E') = \sum_{e \in E'} w(e)$ as the \emph{weight} of $E'$. A \emph{maximum weight matching} is a matching $M^*(G)$ of maximum possible weight in $G$. A \emph{maximum weight perfect matching} is a perfect matching $P^*(G)$ of maximum possible weight (among all perfect matchings of $G$). Given a graph $G$ which admits a perfect matching, we define
$$\eta(G) =  \min_{w:E\rightarrow \RE^+} \frac{w(P^*(G))}{w(M^*(G))}.$$

The value of $\eta(G)$ can be as small as $0$. To see that, consider the path of length $3$ where the middle edge has weight $1$ and the two remaining edges have weight $0$.  The graph $G$ has a single perfect matching $P$ with weight $w(P)=0$, while there is a non-perfect matching with weight $1$. Note that we allow edge weights to be $0$, for otherwise, $\eta(G)$ could be made arbitrarily small as the weights approach $0$, and the minimum would never be attained. By allowing edge weights to be $0$, we show that the minimum is always attained (Theorem~\ref{thm_min}).

A graph $G$ with $\eta(G)=0$ represents one extreme of the problem. In this case, requiring a matching to be perfect may result in a matching with zero weight, where a matching with arbitrarily high weight may exist. In the other extreme, we have graphs $G$ with $\eta(G) = 1$. In this case, for every $w$ there is a perfect matching with the same weight as the maximum weight matching. In Section~\ref{extreme}, we give precise characterizations of these two extremes. In the remainder of the paper, we manage to determine the exact value of $\eta$ for several graphs that lie in between the two extreme cases. Some examples are presented in Figure~\ref{fig:examples}.

\begin{figure}[t]
\centering
\includegraphics[width = .9\textwidth]{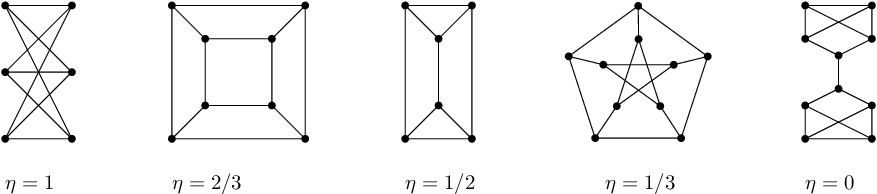}
\caption{Exact value of $\eta$ for five small cubic graphs.}
\label{fig:examples}
\end{figure}

Consider a graph $G$ that is known to be a member of a graph class $\GG$. Since $\eta(G)$ is only defined for graphs that admit a perfect matching, we assume that all graphs in $\GG$ admit perfect matchings. Different graphs $G,G' \in \GG$ may have $\eta(G) \neq \eta(G')$. We define the value of $\eta(\GG)$ for a graph class $\GG$ as:
$$\eta(\GG) = \inf_{G \in \GG} \eta(G).$$
Sometimes, when the graph $G$ or the graph class $\GG$ is clear from the context, we refer to $\eta(G)$ or $\eta(\GG)$ simply as $\eta$.

The infimum in the previous definition is not always attained by a graph in the class. For instance, a consequence of Theorem~\ref{lem:lbip} is that the value of $\eta$ for the class of all regular bipartite graphs is $0$, but by Theorem~\ref{thm:eta0} no graph in that class has $\eta = 0$.

The infimum value bounds the worst case scenario, in terms of our motivation of approximating a maximum weight matching with a perfect matching. In many applications it is not possible to know a priori all graphs that will be given, thus the worst case scenario provides very useful information.  We also note that the infimum value is theoretically richer than its counterpart, the supremum value of $\eta(G)$ for $G \in \GG$. The supremum value is often equal to $1$, since most natural graph classes contain one of the following small graphs which have $\eta = 1$: $K_2,K_4,C_4, K_{3,3}$.

An immediate consequence of this definition is that given two graph classes $\GG_1 \subseteq \GG_2$, we have $\eta(\GG_1) \geq \eta(\GG_2)$. 
Thus, in order to prove bounds on $\eta$ that apply to as many graph classes as possible, it is useful to obtain lower bounds on $\eta$ for ``large'' graph classes and upper bounds on $\eta$ for ``small'' graph classes. 
Therefore, we will treat the upper bounds as valuable knowledge about the whole graph class even when it is proved only for a single example, although most of the obtained upper bounds were extended to infinite families of cubic graphs.

Section~\ref{cubic} is devoted to bounds for the target class of bridgeless
cubic graphs. We show that $\eta(\GG) \geq 1/3$ where $\GG$ is the class of bridgeless cubic graphs, and therefore to all its subclasses.
We show that $\eta(G) = 1/3$ for a particular planar hamiltonian cubic graph $G$, and hence for all the classes that contain it. 
Note that both planar  bridgeless cubic graphs and hamiltonian cubic graphs are Tait-colorable~\cite{bm}. 
We also show that the Petersen graph has $\eta = 1/3$, and hence the class of generalized Petersen graphs has $\eta = 1/3$. Notice that the Petersen graph is the only graph in the class of generalized Petersen graphs that is not Tait-colorable~\cite{Castagna1972}. Since the class of bridgeless cubic graphs is divided into Tait-colorable graphs and snarks, we investigate two well-known families of snarks, Blanu\v{s}a First and Blanu\v{s}a Second, and present an infinite family of snarks obtained by the 2-construction for which all members satisfy $\eta = 1/3$. Furthermore, we show that $\eta(\NN) = 1/3$, where $\NN$ is the class of nonhamiltonian bipartite cubic graphs.
In Section~\ref{s:regular}, we show that for all $\Delta \geq 2$, the class $\BB_\Delta$ of $\Delta$-regular bipartite graphs satisfies $\eta(\BB_\Delta) = 1/\Delta$. 

\section{Extreme Cases}
\label{extreme}

Given a graph $G$ that admits a perfect matching, we recall that the parameter $\eta$ is defined as
$$\eta(G) =  \min_{w:E\rightarrow \RE^+} \frac{w(P^*(G))}{w(M^*(G))},$$
where $P^*(G)$ is the maximum weight perfect matching, $M^*(G)$ is the maximum weight matching, and $w$ is the non-negative weight function, with at least one edge of non-zero weight. We start this section by showing that $\eta(G)$ is well defined for all graphs $G$ that admit a perfect matching.

\begin{thm}
\label{thm_min}
The \emph{minimum} in the definition of $\eta(G)$ is attained for all
graphs $G$ that admit a perfect matching.
\end{thm}
\begin{proof}
Fix a graph $G = (V,E)$.
For ease of notation, let $W := \{ w:E\to\mathbb{R}^+ \;;\;$ there is $e \in E$ such that $w(e)>0\}$, $\mathcal{M} := \{M \subseteq E \;;\; M$ is a matching$\}$, and $\mathcal{P} := \{P \in \mathcal{M} \;;\; P$ is a perfect matching\}.
Note that $\mathcal{M}$ and $\mathcal{P}$ are finite.

Let $Q := \{ w(P^*)/w(M^*) \;;\; w\in W$, $P^*$ is a maximum weight perfect matching for $w$, and $M^*$ is a maximum weight matching for $w\}$, and let $Q' := \{ w(P^*) \;;\; w\in W$, $P^*$ is a maximum weight perfect matching for $w$, and there is a matching $M$ such that $w(M) = 1$ and $ w(E \smallsetminus M)=0 \}$.
	
\begin{claim}
	$\inf Q = \inf Q'$.
\end{claim}
Indeed, since $Q' \subseteq Q$ we immediately get $\inf Q \leq \inf Q'$.
For the converse, we will show that for any $q \in Q$ there is $q' \in Q'$ such that $q' \leq q$.
Let $q = w(P^*)/w(M^*) \in Q$, and define a new weight function $w'$ by 
\[w'(e) = 
	\begin{cases}
		0,& \text{if } e \not \in M^* \\
		w(e)/w(M^*), &\text{otherwise}.
	\end{cases} \]
Then $M^*$ is still a maximum weight matching for $w'$, and furthermore $w'(M^*) = 1$ and $w'(E \smallsetminus M^*) = 0$.
Finally, if $P'$ is a maximum weight perfect matching for $w'$ then $w'(P') \in Q'$, and by construction we have 
\[\begin{array}{rcl}
		w'(P') 
			&\leq& w(P')/w(M^*) \\
			&\leq& w(P^*)/w(M^*) \\
			&=&q,
	\end{array} \]
which concludes the proof of the claim.

Given a matching $M$ and a perfect matching $P$ of $G$, let $Q'_{M,P} := \{ w(P) \;;\; w\in W$, $P$ is a maximum weight perfect matching for $w$, $w(M)=1$, and $w(E \smallsetminus M)=0\}$.
Note that	
	\[ Q' = \displaystyle\bigcup_{M \in \mathcal{M}} \bigcup_{P \in \mathcal{P}} Q'_{M,P}, \]
and therefore we have
\[	\begin{array}{rcl}
			\inf Q
				&=& \inf Q' \\
				&=& \displaystyle\inf_{M \in \mathcal{M}} \inf_{P \in \mathcal{P}} \inf Q'_{M,P} \\
				&=& \displaystyle\min_{M \in \mathcal{M}} \min_{P \in \mathcal{P}} \inf Q'_{M,P}.
		\end{array} \]
Thus, in order to conclude that $Q$ has a minimum it is enough to conclude that each $Q'_{M,P}$ has a minimum, which we will do by showing that each $Q'_{M,P}$ is compact.

Given $M = \{e_1,\ldots,e_k\}$ and $P \in \mathcal{P}$, consider the set $K := \{ (w(e_1),\ldots,w(e_k)) \in \mathbb{R}^k \;;\; w\in W$, $P$ is a maximum weight perfect matching for $w$, $w(M)=1$, and $w(E \smallsetminus M)=0\}$.

\begin{claim}
	$K$ is compact.
\end{claim}
Indeed, since $K \subseteq [0,1]^k$, it is enough to show that $K$ is closed.
To this end, let $(w_n(e_1),\ldots,w_n(e_k))_{n\in \mathbb{N}}$ be a sequence of elements of $K$ which converges to $x = (x_1,\ldots,x_k) \in \mathbb{R}^k$, i.e., such that $x_i = \lim_{n \in \mathbb{N}} w_n(e_i)$ for each $i = 1,\ldots,k$.
Define a weight function $w$ by 
	\[ 	w(e) = 
			\begin{cases}
				0,&\text{if } e \not \in M \\
				x_i,&\text{if } e = e_i.
			\end{cases} \]
To conclude that $x \in K$, we have to show (i) $w(E \smallsetminus M) = 0$, (ii) $w(M) = 1$, (iii) $w \in W$, and (iv) $P$ is a maximum weight perfect matching for $w$.
By construction we have (i). 
For (ii), note that 
\[\begin{array}{rcl}
		w(M) 
		&=& \displaystyle\sum_{e\in M} w(e) \\
		&=& \displaystyle\sum_{e\in M} \lim_{n \in \mathbb{N}} w_n(e) \\
		&=& \displaystyle\lim_{n \in \mathbb{N}}\sum_{e\in M}  w_n(e) \\
		&=& \displaystyle\lim_{n \in \mathbb{N}}w_n(M) \\
		&=& 1,
	\end{array} \]
from which we can also directly conclude (iii).
Finally, for (iv), note that for any perfect matching $P'$, we have 
\[\begin{array}{rcl}
		w(P') 
		&=& \displaystyle\sum_{e\in P'} w(e) \\
		&=& \displaystyle\sum_{e\in P'} \lim_{n \in \mathbb{N}} w_n(e) \\
		&=& \displaystyle\lim_{n \in \mathbb{N}}\sum_{e\in P'}  w_n(e) \\
		&=& \displaystyle\lim_{n \in \mathbb{N}}w_n(P'). \\
	\end{array} \]
Therefore, since for all $n$ we have $w_n(P) \geq w_n(P')$, it follows that 
\[\begin{array}{rcl}
		w(P) 
		&=& \displaystyle\lim_{n \in \mathbb{N}}w_n(P) \\
		&\geq& \displaystyle\lim_{n \in \mathbb{N}}w_n(P') \\
		&=& w(P')
	\end{array} \]
which concludes the proof that $K$ is compact.

But now, since 
	\[ Q'_{M,P} = \left\{\sum_{\substack{1 \leq i \leq k \\e_i \in P}} x_i \;;\; (x_1,\ldots,x_k) \in K \right\},\] 
it follows that $Q'_{M,P}$ is compact, since it is the continuous image of a compact set.
This concludes the proof.
\end{proof}

We defined $\eta(G)$ for a graph $G$ which has a perfect matching. By definition $0 \leq \eta(G) \leq 1$. In order to get used to the definition of $\eta$, we characterize the graphs that attain the extreme values of $\eta$. First, we characterize the graphs $G$ with $\eta(G) = 1$.

\begin{thm}
\label{thm:eta1}
A graph $G$ has $\eta(G) = 1$ if and only if all connected components of $G$ are $K_{2n}$ or $K_{n,n}$ for $n \geq 1$.
\end{thm}
\begin{proof}
Sumner~\cite{sumner79} showed that a graph $G$ satisfies that every maximal matching is a perfect matching if and only if every connected component of $G$ is a $K_{2n}$ or $K_{n,n}$ for $n \geq 1$. Therefore, it suffices to show that $\eta(G)=1$ if and only if every maximal matching of $G$ is a perfect matching.

First, we prove that if $\eta(G) = 1$, then every maximal matching of $G$ is a perfect matching. For the sake of a contradiction, suppose $\eta(G) = 1$ and let $M$ be a maximal matching that is not a perfect matching. Define $w(e) = 1$ if $e \in M$ and $w(e) = 0$ otherwise. For any perfect matching $P$ of $G$ it holds that $M \not\subseteq P$. Thus, there is at least one edge $e \in M$ such that $e \not \in P$. Therefore, $w(M) > w(P)$ and consequently $\eta(G) < 1$.

Second, we prove that if every maximal matching of $G$ is a perfect matching, then $\eta(G) = 1$. For a fixed weight function $w : E \rightarrow \RE^+$, let $M^*(G)$ be the matching of maximum weight. If there are edges with zero weight, it is possible that $M^*(G)$ is not a perfect matching. Nevertheless, there is a perfect matching of maximum weight $P^*(G) \supseteq M^*(G)$ which can be obtained from $M^*(G)$ by including edges of zero weight. Consequently, $w(P^*(G)) = w(M^*(G))$ and $\eta(G)~=~1$.
\end{proof}

Note that, if we allow only positive nonzero weights, then every matching of maximum weight is a maximal matching.
The condition in the proof of Theorem~\ref{thm:eta1} that every maximal matching is a perfect matching now implies that every matching of maximum weight is actually perfect, the sets of matchings of type $M^*$ and $P^*$ are equal, and sufficiency is immediate.
However, if we allow negative weights, then sufficiency does not hold.
On the other hand, necessity holds regardless of allowing zero or negative weights.

Next, we characterize graphs $G$ with $\eta(G) = 0$.

\begin{thm} \label{thm:eta0}
A graph $G$ has $\eta(G) = 0$ if and only if there is an edge $e \in E$ that is not contained in any perfect matching.
\end{thm}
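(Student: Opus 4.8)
The plan is to prove both directions of the equivalence, and the statement suggests splitting the argument exactly as in the previous theorem. I would first establish the easy direction: if there is an edge $e$ contained in no perfect matching, then $\eta(G) = 0$. Here I would simply set $w(e) = 1$ and $w(f) = 0$ for all other edges $f$. The single-edge matching $\{e\}$ has weight $1$, so $w(M^*(G)) \geq 1$. On the other hand, since $e$ lies in no perfect matching, every perfect matching $P$ avoids $e$ and therefore has $w(P) = 0$. Hence $w(P^*(G)) = 0$, giving $\eta(G) = 0$ directly from the definition.

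For the converse, I would prove the contrapositive: if every edge of $G$ lies in some perfect matching, then $\eta(G) > 0$. Fix an arbitrary weight function $w$ and let $M^*(G)$ be a maximum weight matching. The goal is to exhibit a perfect matching whose weight is bounded below by a fixed fraction of $w(M^*(G))$. The natural idea is to take a maximum weight edge $e^* \in M^*(G)$, so that $w(e^*) \geq w(M^*(G))/|M^*(G)|$, and then extend $e^*$ to a perfect matching $P$ using the hypothesis that $e^*$ belongs to some perfect matching. Since all weights are nonnegative, $w(P) \geq w(e^*) \geq w(M^*(G))/|M^*(G)|$. Because $|M^*(G)| \leq |V|/2$, this yields a uniform positive lower bound on $w(P^*(G))/w(M^*(G))$ independent of $w$, and hence $\eta(G) > 0$.

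The one point requiring care is that the ratio bound must be uniform over all weight functions $w$, since $\eta(G)$ is an infimum (a minimum, once attained) over all $w$. The argument above produces the bound $\eta(G) \geq 1/|M^*(G)| \geq 2/|V|$, which depends only on $G$ and not on $w$, so taking the minimum over $w$ preserves strict positivity. I expect the main (though still mild) obstacle to be handling the degenerate situation where $w(M^*(G)) = 0$; but the definition forbids the all-zero weight function, so $w(M^*(G))$ is strictly positive and the ratio is well defined. It remains only to observe that choosing $e^*$ to be a heaviest edge of $M^*(G)$ guarantees $w(e^*) > 0$, which closes the argument.
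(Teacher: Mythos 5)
Your proof is correct, and its first direction coincides with the paper's in substance: weight $1$ on the edge avoided by all perfect matchings, $0$ elsewhere, so every perfect matching has weight $0$ while $w(M^*(G)) \geq 1$. For the converse the paper uses the same key idea as you --- under the hypothesis, any edge of positive weight lies in some perfect matching $P$, so $w(P) > 0$ --- but stops there: it concludes $\eta(G) > 0$ from the pointwise positivity of the ratio $w(P^*(G))/w(M^*(G))$ for each fixed $w$. Strictly speaking that step needs the minimum over $w$ in the definition of $\eta$ to be attained (which it is: by scale-invariance one may restrict to the compact simplex of weights summing to $1$, on which both $w(P^*(G))$ and $w(M^*(G))$ are continuous and the denominator is bounded away from $0$), a point the paper leaves implicit. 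Your refinement --- taking $e^*$ to be a heaviest edge of $M^*(G)$, so that $w(e^*) \geq w(M^*(G))/|M^*(G)|$, and passing to a perfect matching through $e^*$ --- yields the uniform bound $\eta(G) \geq 2/|V|$ valid for every $w$ simultaneously, which sidesteps the attainment issue entirely and gives quantitative information the paper's qualitative argument does not. Your treatment of the degenerate case is also right: since the all-zero weight function is excluded, $w(M^*(G)) \geq \max_{e \in E} w(e) > 0$, so $M^*(G)$ is nonempty and $e^*$ has strictly positive weight.
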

\begin{proof}
First, we prove that if for every edge $e \in E$ there is a perfect matching that contains $e$, then $\eta(G) > 0$. We remind the reader that by definition of the weight function, at least one edge $e$ has $w(e) > 0$. Therefore, there is a perfect matching $P$ that contains $e$ and have $w(P) > 0$. Consequently, $\eta(G) > 0$.

Second, we prove that if there is an edge $e \in E$ that is not in any perfect matching, then $\eta(G) = 0$. Let the weight of $e$ be $1$ and the weight of all other edges be $0$. In this case, all perfect matchings have weight $0$ and the maximum weight matching has weight $1$. Consequently, $\eta(G) = 0$.
\end{proof}

The previous theorem implies, for example, that $\eta(G) = 0$ for every cubic graph $G$ which contains a bridge and admits a perfect matching, since an edge that is adjacent to the bridge is not in any perfect matching. Cubic graphs with all bridges on a single path admit a perfect matching~\cite{errera22}.

\section{Bridgeless Cubic Graphs}
\label{cubic}

In this section, we provide upper and lower bounds on $\eta$ for our target class of bridgeless cubic graphs. We start with a lower bound for arbitrary bridgeless cubic graphs. Remark that the lower bound extends to all subclasses.

Clearly, if $G$ is Tait-colorable, then each edge is contained in a perfect matching, which implies that $\eta(G)>0$. Actually, we get a better lower bound, since a Tait-colorable graph $G$ admits 3 perfect matchings so that each edge is covered precisely once, which gives $\eta(G) \geq 1/3$. The famous Berge-Fulkerson Conjecture~\cite{Fulkerson1971,Giuseppe} says that every bridgeless cubic graph admits a family of 6 perfect matchings such that each edge is covered precisely twice.

The proof of Lemma~\ref{lem:lb} establishes the lower bound of $\eta(G) \geq 1/3$, for an arbitrary bridgeless cubic graph, by using a property~\cite{edmonds65} more general than a Tait-coloring but weaker than the Berge-Fulkerson Conjecture.

\begin{lem} \label{lem:lb}
Let $G$ be a bridgeless cubic graph. Then, $\eta(G) \geq 1/3$.
\end{lem}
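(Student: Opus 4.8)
The plan is to exhibit, for any bridgeless cubic graph $G$ and any weight function $w$, a single perfect matching whose weight is at least $\frac{1}{3}w(M^*(G))$. The cleanest route is to cover the edge set (or at least the maximum weight matching $M^*(G)$) by a family of perfect matchings in such a way that the averaging argument yields the $1/3$ factor. As the discussion preceding the lemma already suggests, a Tait-coloring would partition $E$ into three perfect matchings $P_1, P_2, P_3$, and then for $M^* = M^*(G)$ we would have $w(P_1) + w(P_2) + w(P_3) = w(E) \geq w(M^*)$, so $\max_i w(P_i) \geq \frac{1}{3}w(M^*)$, giving $\eta(G) \geq 1/3$ immediately. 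The obstacle is that not every bridgeless cubic graph is Tait-colorable (snarks are precisely the counterexamples), so I cannot rely on a proper $3$-edge-coloring in general.

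To overcome this, I would invoke the weaker covering property cited as reference \cite{conjecture1}: every bridgeless cubic graph admits a family of perfect matchings that together cover every edge, with each edge covered a bounded number of times. The key quantitative fact I would use is that there exists a finite list of perfect matchings $P_1, \dots, P_k$ and a uniform constant so that each edge $e \in E$ lies in a fraction at least $1/3$ of the weighted family — concretely, a result guaranteeing that every edge is covered by some perfect matching and that one can choose perfect matchings $P_1, \ldots, P_k$ with each edge covered in exactly the same proportion (at least a third) across the family. Then summing $w(P_i)$ over the family and dividing by $k$, the total $\frac{1}{k}\sum_{i=1}^{k} w(P_i)$ equals $\sum_{e \in E} w(e) \cdot \frac{c(e)}{k}$, where $c(e)$ is the number of matchings in the family containing $e$; if $c(e)/k \geq 1/3$ for all $e$, this is at least $\frac{1}{3} w(E) \geq \frac{1}{3} w(M^*)$. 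Hence the best matching in the family attains at least the average, which is at least $\frac{1}{3}w(M^*)$.

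The main step, then, is to state and apply the right covering theorem for bridgeless cubic graphs. The property I expect to cite is that every bridgeless cubic graph has a list of perfect matchings covering each edge at least once, with the cover being ``balanced'' enough that the per-edge coverage ratio is bounded below by $1/3$; this is exactly the kind of statement that sits between Tait-colorability (three matchings, each edge once) and the full Berge--Fulkerson Conjecture (six matchings, each edge twice). I would verify that the chosen result indeed gives a uniform lower bound of $1/3$ on the fractional coverage of every edge, since the averaging argument is only as strong as the weakest-covered edge.

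The hard part will be pinning down precisely which covering result yields the clean $1/3$ without assuming Tait-colorability, and confirming that it applies to all bridgeless cubic graphs (including snarks). Once the correct covering family is in hand, the averaging and the final inequality $w(P^*) \geq \max_i w(P_i) \geq \frac{1}{3} w(M^*)$ are routine. I would conclude by noting that since this holds for every $w$, we obtain $\eta(G) = \min_w \frac{w(P^*(G))}{w(M^*(G))} \geq 1/3$, as required.
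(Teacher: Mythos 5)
Your proposal is correct and follows essentially the same route as the paper: the paper's proof cites exactly the covering result you anticipate from~\cite{conjecture1} --- for every bridgeless cubic graph there is an integer $k$ such that $G$ has $3k$ perfect matchings covering each edge exactly $k$ times (so the per-edge coverage proportion is exactly $1/3$, settling the point you left open) --- and then applies the same averaging argument, $w(M^*(G)) \leq \frac{1}{k}\sum_{i=1}^{3k} w(P_i) \leq 3\,w(P_1) \leq 3\,w(P^*(G))$. Note that your requirement $c(e)/k \geq 1/3$ for all edges in fact forces equality by a count of edge--matching incidences, which is consistent with the cited theorem.
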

\begin{proof}
It is known that given a bridgeless cubic graph $G$, there is an integer $k$ (depending on $G$) such that $G$ has a family of $3k$ perfect matchings that cover each edge of $G$ exactly $k$ times~\cite{edmonds65}. Let $P_1,\ldots,P_{3k}$ denote such perfect matchings. Assume without loss of generality that $w(P_1) \geq \cdots \geq w(P_{3k})$. Let $M^*(G)$ be the maximum weight matching. We have that
\[w(M^*(G)) \leq \frac{w(P_1) + \cdots + w(P_{3k})}{k} \leq 3\;w(P_1) \leq 3\;w(P^{*}(G))\]
and therefore $\eta(G) \geq 1/3$.
\end{proof}

Since upper bounds on $\eta$ extend to superclasses, it is useful to prove upper bounds for graphs that are contained in several relevant classes. A particular subclass of bridgeless cubic graphs is the class of Tait-colorable graphs. Two subclasses of Tait-colorable graphs are planar bridgeless cubic graphs and hamiltonian cubic graphs. We start by proving a tight bound for the intersection of the two aforementioned classes.

\begin{lem} \label{lem:ub}
There are infinitely many planar hamiltonian cubic graphs $G$ with $\eta(G) = 1/3$.
\end{lem}
\begin{proof}
First, let $G$ be the cubic graph represented in Figure~\ref{fig:tait13}(a). 
Note that $G$ is planar and hamiltonian (see Figure~\ref{fig:tait13}(b)). 
By Lemma~\ref{lem:lb}, $\eta(G) \geq 1/3$. We now show that $\eta(G) \leq 1/3$. Let $e_1,e_2,e_3$ and $v_{1,2},v_{2,3},v_{1,3}$ be the edges and vertices labeled in Figure~\ref{fig:tait13}(c). 
We can set $w(e_1)=w(e_2)=w(e_3)=1$ and set all other edge weights to $0$. 
A perfect matching may contain at most one of $e_1,e_2,e_3$. 
To see that, note that if a matching contains two such edges $e_i,e_j$, then vertex $v_{i,j}$ indicated in Figure~\ref{fig:tait13}(c) cannot be saturated by any edge of the matching. 
Therefore, there is a matching $e_1,e_2,e_3$ of weight $3$ and a perfect matching may have weight at most~$1$, which implies that $\eta(G) \leq 1/3$.

\begin{figure}[ht]
\centering
\includegraphics[scale = .35]{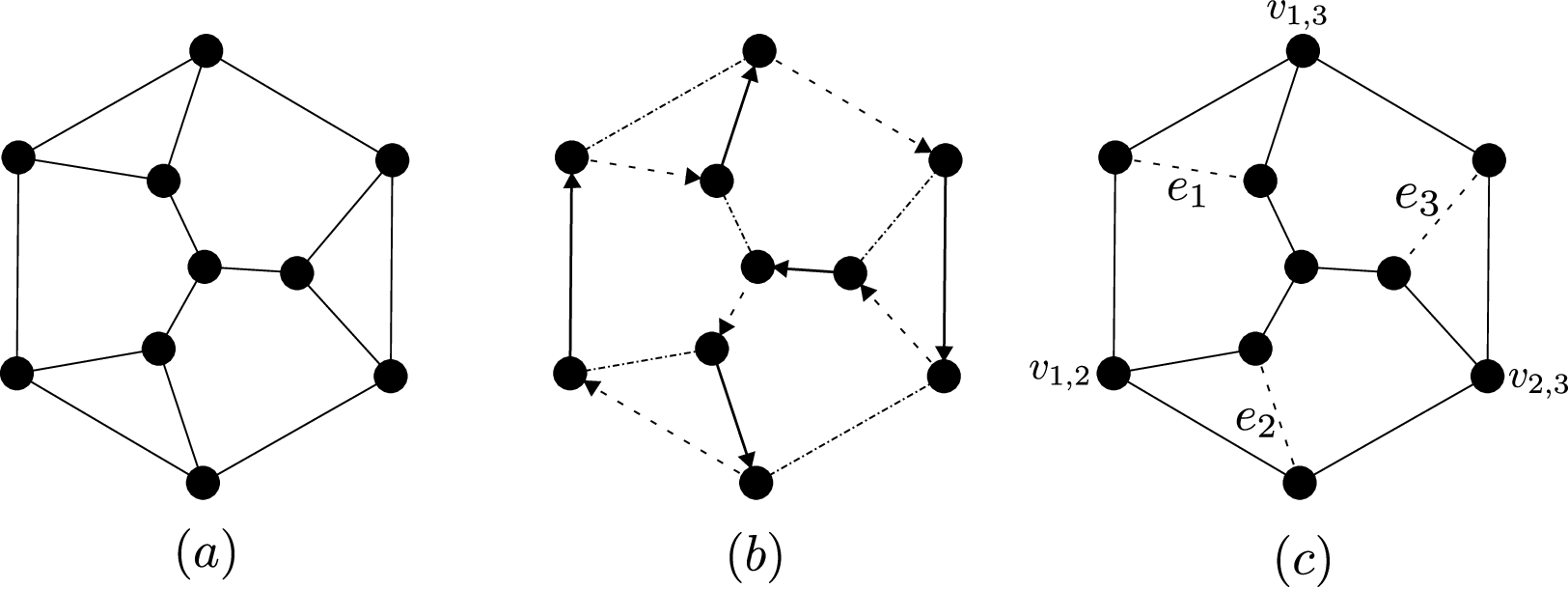}
\caption{(a)~Tait-colorable cubic graph $G$ used in the proof of Lemma~\ref{lem:ub}. (b)~Tait-coloring and hamiltonian cycle of $G$. (c)~Edges and vertices used in the proof of Lemma~\ref{lem:ub}.}
\label{fig:tait13}
\end{figure}

One way to obtain infinitely many such graphs is to remove the central vertex of the graph in Figure~\ref{fig:tait13} and connect the remaining graph through a matching of size 3 to another planar hamiltonian cubic graph with one vertex removed. Another way to obtain infinitely many such graphs is to remove an edge incident to the central vertex of this graph and connect the remaining graph through a matching of size 2 to another planar hamiltonian cubic graph with one edge removed. Both constructions are classical in the theory of cubic graphs~\cite{isaacs75}.
\end{proof}

The upper bound of the next result for bipartite cubic graphs uses an easy but powerful counting argument that will be generalized next. While the argument generally gives rise to high upper bounds, in some cases, as the cube graph, the bound is in fact tight.

\begin{proposition} \label{prop:cube}
Let $Q$ be the cube graph. Then, $\eta(Q) = 2/3$.
\end{proposition}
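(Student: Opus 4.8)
The plan is to exhibit a single weight function on the cube graph $Q = Q_3$ for which the best perfect matching captures at most $2/3$ of the best general matching's weight. Since $Q$ is a bridgeless cubic graph, Lemma~\ref{lem:lb} already gives $\eta(Q) \geq 1/3$, so this proposition is purely an upper-bound statement: I need only produce \emph{one} bad weight function $w$ and show $w(P^*(Q))/w(M^*(Q)) \leq 2/3$. The natural candidate, mirroring the counting idea flagged in the sentence preceding the proposition (``an easy but powerful counting argument''), is to put weight $1$ on a carefully chosen set $F$ of edges and weight $0$ everywhere else, so that $w(M^*) = \nu(F)$, the maximum number of pairwise-disjoint edges of $F$, while every perfect matching is forced to miss at least a $1/3$ fraction of those edges.

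First I would fix a concrete drawing/labeling of the $8$ vertices of $Q$ as binary strings in $\{0,1\}^3$. Then I would select the weighted edge set $F$. A clean choice exploiting the structure is a perfect matching of $Q$ itself — say the three ``direction'' classes — but more likely the right object is a set of edges that a maximum matching can hit fully yet a \emph{perfect} matching cannot. I expect to take $F$ to be a set for which $M^* = F$ is itself a matching of size $3$ (weight $3$), chosen so that no perfect matching of $Q$ contains all three edges of $F$; then since the edges carry weight, the best perfect matching can use at most two of them, giving $w(P^*) \leq 2$ and hence the ratio $2/3$. The verification that $w(M^*) = 3$ is immediate once $F$ is a matching, so the content is entirely in the perfect-matching side.

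The key combinatorial step — and the one I expect to be the main obstacle — is establishing the upper bound $w(P^*(Q)) \leq 2$, i.e.\ that no perfect matching of $Q$ uses all three weighted edges simultaneously. The clean way to argue this is the counting/parity argument: show that if a perfect matching contained all three chosen edges, then the remaining vertices could not be perfectly matched among the remaining edges of $Q$, by examining which vertices are left and checking that they induce no perfect matching (often by an odd-component or isolated-vertex obstruction, exactly in the style of the $v_{i,j}$ argument in Lemma~\ref{lem:ub}). Here I must be careful, because unlike in Lemma~\ref{lem:ub} the cube is vertex-transitive and quite symmetric, so the three edges cannot be chosen so that two of them already prevent saturation of a shared vertex; the obstruction must come from the global structure after deleting the endpoints of all three chosen edges. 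Concretely, I would delete the six endpoints of the three edges of $F$ and check that the induced subgraph on the two remaining vertices of $Q$ has no edge between them — forcing those two vertices to be matched using edges outside the deleted set, which fails — thereby proving at most two of the $F$-edges can lie in any perfect matching.

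Finally, I would assemble the ratio: with $w(M^*(Q)) = w(F) = 3$ and $w(P^*(Q)) \leq 2$, we conclude $\eta(Q) \leq w(P^*(Q))/w(M^*(Q)) \leq 2/3$, as claimed. The only real risk in this plan is that my first guess for $F$ turns out to be extendable to a perfect matching (the cube has nine perfect matchings, so some care is needed); if so, I would adjust $F$ to a different set of three edges — still a matching of weight $3$ — whose six endpoints leave a non-adjacent pair, and re-run the deletion check. Since the proposition only asserts $\eta(Q) \leq 2/3$ rather than equality, I do not need to pin down the exact value of $\eta(Q)$, which keeps the argument short.
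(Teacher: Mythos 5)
Your proposal is correct and is essentially the paper's own proof: the paper likewise puts weight $1$ on three edges forming a maximal (non-perfect) matching of $Q$ and weight $0$ elsewhere, with the edges chosen (via a figure) so that the two unsaturated vertices form an independent set, whence any perfect matching contains at most two of the three weighted edges. The concrete choice you leave open does exist --- e.g.\ a matching of size $3$ whose two uncovered vertices are antipodal in $\{0,1\}^3$, hence nonadjacent --- so your deletion check goes through exactly as planned.
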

\begin{proof}
First, we show that $\eta(Q) \leq 2/3$. Let $e_1,e_2,e_3$ be the edges labeled in Figure~\ref{fig:cube}. We can set $w(e_1)=w(e_2)=w(e_3)=1$ and set all other edge weights to $0$. 
Since $Q\smallsetminus \{e_1,e_2,e_3\}$ is an independent set, any perfect matching of $Q$ may contain at most two edges among $e_1,e_2,e_3$.

\begin{figure}[ht]
\centering
\includegraphics[width = .15\linewidth]{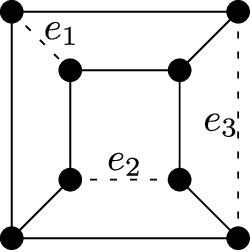}
\caption{Cube graph $Q$ with edges used in the proof of Proposition~\ref{prop:cube} marked.
}
\label{fig:cube}
\end{figure}

Second, we show that $\eta(Q) \geq 2/3$. We consider the weight assignment $w$ that defines $\eta$. Let $M^*$ and $P^*$ respectively be a maximum weight matching and a maximum weight perfect matching with respect to $w$. We assume without loss of generality that $M^*$ is maximal. Since $M^*$ is maximal, it contains at least $3$ edges. If $M^*$ is perfect, than $\eta(Q) = w(P^*)/w(M^*) = 1$. Therefore, $M^*$ must contain exactly $3$ edges. By removing from $M^*$ the edge of minimum weight, we obtain a matching $M'$ with $w(M') \geq 2 w(M^*)/3$. Since $M'$ has two edges, it can be extended to a perfect matching $P' \supset M'$ with $w(P') \geq 2 w(M^*)/3$, proving that $\eta(Q) \geq 2/3$.
\end{proof}

Next, we introduce a lemma that generalizes the upper bound from Proposition~\ref{prop:cube} and will help us to prove upper bounds for two cubic graphs in the well-known class of generalized Petersen graphs~\cite{Watkins_gen}.

\begin{lem}\label{maxmatchings}
Let $M$ be a maximal matching of a bridgeless cubic graph $G$ and $S=V \smallsetminus \{M$-saturated vertices$\}$ be the corresponding independent set. We have the upper bound:

$$\eta(G)\leq \frac{\left|V\right|-2|S|}{|V|-|S|}$$
\end{lem}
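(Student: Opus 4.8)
The plan is to generalize the counting argument of Proposition~\ref{prop:cube} by choosing a weight function tailored to the given maximal matching $M$. Concretely, I would set $w(e)=1$ for every $e\in M$ and $w(e)=0$ for every other edge. Since $M$ saturates exactly $|V|-|S|$ vertices, two per edge, we have $|M|=(|V|-|S|)/2$, and $M$ is itself a matching of weight $|M|$; hence the maximum weight matching satisfies $w(M^*(G))\ge |M|=(|V|-|S|)/2$. This takes care of the denominator of the ratio, so the whole problem reduces to bounding $w(P^*(G))$, which under this weighting is just the number of edges of $M$ that a perfect matching can contain.

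The heart of the argument is to show that no perfect matching $P$ keeps more than $(|V|-2|S|)/2$ edges of $M$, equivalently that $P$ must \emph{miss} at least $|S|/2$ of them. Here I would exploit that $M$ is maximal, so $S$ is an independent set. Because $P$ is perfect it saturates every vertex of $S$, and since $S$ is independent each such vertex is matched by $P$ to a \emph{distinct} vertex of $V\smallsetminus S$; let $T$ be this set of partners, so $|T|=|S|$ and $T\subseteq V\smallsetminus S$. Now every edge of $M$ has both endpoints in $V\smallsetminus S$, and any edge of $M$ incident to a vertex of $T$ cannot lie in $P$, because that endpoint is already used by the $P$-edge running into $S$.

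The key combinatorial step is then a pigeonhole count: the $|S|$ vertices of $T$ lie on edges of $M$, at most two per edge, so they are incident to at least $|S|/2$ \emph{distinct} edges of $M$, each of which is thereby excluded from $P$. (Note $|V|$ is even, so $|S|=|V|-2|M|$ is even and $|S|/2$ is an integer, which avoids any rounding.) Consequently $P$ retains at most $|M|-|S|/2=(|V|-2|S|)/2$ edges of $M$, giving $w(P^*(G))\le (|V|-2|S|)/2$. Dividing by the lower bound on $w(M^*(G))$ yields $\eta(G)\le \frac{(|V|-2|S|)/2}{(|V|-|S|)/2}=\frac{|V|-2|S|}{|V|-|S|}$. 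The one place that needs care is this pigeonhole step: one must argue about the number of distinct $M$-edges \emph{hit} by $T$, rather than simply the number of endpoints consumed; everything else is bookkeeping.
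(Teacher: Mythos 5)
Your proposal is correct and takes essentially the same approach as the paper: the identical weight function ($w(e)=1$ on $M$, $0$ elsewhere), the same use of the independence of $S$ forced by maximality of $M$, and the same parity observation that $|S|$ is even. The only cosmetic difference is the direction of the count --- you bound $|P\cap M|$ by pigeonholing the partner set $T$ onto at least $|S|/2$ excluded $M$-edges, giving $|P\cap M|\leq |M|-|S|/2$, while the paper counts the $|S|$ edges of $P$ having exactly one endpoint in $S$ and subtracts from $|P|=|V|/2$; both yield the identical bound $|V|/2-|S|$ and hence the same ratio.
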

\begin{proof}
Since $S$ is an independent set, any perfect matching must have exactly $|S|$ edges not of $M$, each of which with exactly one end vertex in $S$, and therefore at most $\frac{|V|}{2}-|S|$ edges of any perfect matching are in $M$. Every cubic graph has an even number of vertices which implies that $|S|$ is even and $|M|=\frac{|V|}{2}-\frac{|S|}{2}$ and the lemma follows by setting $w(e)=1$, if $e \in M$, $w(e)=0$, otherwise.
\end{proof}

A generalized Petersen graph $G(n,k)$, for $n \geq 3$ and $1 \leq k \leq \lfloor (n-1)/2 \rfloor$, is a graph with vertex set $\{u_0,\ldots,u_{n-1},v_1,\ldots,v_{n-1}\}$, and the following three types of edges for $0 \leq i \leq n-1$: $u_iu_{i+1}$, $u_iv_i$, and $v_iv_{i+k}$, with subscripts modulo $n$~\cite{Watkins_gen}. The class of generalized Petersen graphs does not contain the graph in Figure~\ref{fig:tait13} but contains the graph of Figure~\ref{fig:cube}, since $Q=G(4,1)$. We consider the Nauru graph $N=G(12,5)$, which has several possible maximal matchings to be studied, and the famous Petersen graph $R=G(5,2)$, the only non Tait-colorable graph in the class of generalized Petersen graphs~\cite{Castagna1972}.

\begin{proposition} \label{prop:bipartiteubNauru}
Let $N$ be the Nauru graph. Then, $\eta(N) \leq 1/2$.
\end{proposition}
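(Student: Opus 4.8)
The plan is to invoke Lemma~\ref{maxmatchings}. Since the Nauru graph $N=G(12,5)$ has $|V|=24$, the bound $\frac{|V|-2|S|}{|V|-|S|}=\frac{24-2|S|}{24-|S|}$ is at most $1/2$ exactly when $|S|\ge 8$. Hence it suffices to exhibit a single maximal matching of $N$ whose set $S$ of unsaturated vertices is independent and has size $8$. Equivalently --- and this is the formulation I would actually work with --- it suffices to find an independent set $S$ with $|S|=8$ such that $N-S$ admits a perfect matching: any such perfect matching is then a matching of $N$ that is maximal (no edge can be added, precisely because the only unsaturated vertices lie in the independent set $S$), and its unsaturated set is exactly $S$.

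To build $S$, I would exploit the explicit structure of $G(12,5)$: an outer $12$-cycle on $u_0,\dots,u_{11}$, the spokes $u_iv_i$, and the inner edges $v_iv_{i+5}$. First I would place four unsaturated vertices on the outer cycle spaced three apart, say $u_0,u_3,u_6,u_9$; these are pairwise nonadjacent, and the remaining eight outer vertices split into four consecutive pairs covered by the outer edges $u_1u_2,u_4u_5,u_7u_8,u_{10}u_{11}$. The remaining four unsaturated vertices I would take on the inner part, choosing their indices disjoint from $\{0,3,6,9\}$ so that no spoke ever joins an unsaturated $u_i$ to an unsaturated $v_i$; this index-disjointness is what keeps $S$ independent across the spokes, and it also means no spoke needs to enter the matching.

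The delicate point --- and the step I expect to be the main obstacle --- is the inner cycle. Because $\gcd(5,12)=1$, the inner edges $v_iv_{i+5}$ form a single $12$-cycle, but in the shuffled order $v_0,v_5,v_{10},v_3,v_8,v_1,v_6,v_{11},v_4,v_9,v_2,v_7$. I must delete four inner vertices that (i) are pairwise nonadjacent on this inner cycle, (ii) have indices avoiding $\{0,3,6,9\}$, and (iii) leave a subgraph that still has a perfect matching. Condition (iii) is the real constraint: deleting an independent set from a cycle leaves a disjoint union of paths, and a path has a perfect matching only when it has an even number of vertices. I would satisfy all three conditions at once by deleting vertices at four equally spaced positions of the inner cycle, namely $v_5,v_8,v_{11},v_2$ (positions $1,4,7,10$ in the order above); this splits the inner cycle into four arcs of two vertices each, every arc being a single inner edge, so the leftover inner vertices are matched by $v_{10}v_3,v_1v_6,v_4v_9,v_7v_0$.

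Putting the pieces together, $S=\{u_0,u_3,u_6,u_9,v_2,v_5,v_8,v_{11}\}$ is independent with $|S|=8$, and $M=\{u_1u_2,u_4u_5,u_7u_8,u_{10}u_{11},v_{10}v_3,v_1v_6,v_4v_9,v_7v_0\}$ is a maximal matching of $N$ whose unsaturated set is exactly $S$. Lemma~\ref{maxmatchings} then gives $\eta(N)\le\frac{24-16}{24-8}=\frac12$. The only remaining work is routine verification: that $S$ is independent (a few pairwise checks on each cycle, plus the index-disjointness across the spokes) and that $M$ is a matching covering precisely $V\setminus S$.
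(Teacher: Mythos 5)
Your proposal is correct and follows essentially the same route as the paper: both apply Lemma~\ref{maxmatchings} to a maximal matching of size $8$ in $N$ whose $8$ unsaturated vertices form an independent set, yielding $\eta(N)\leq\frac{24-16}{24-8}=\frac{1}{2}$. The only difference is presentational --- the paper exhibits such a matching in a figure, while you construct one explicitly in the $G(12,5)$ coordinates (and your verification that $S=\{u_0,u_3,u_6,u_9,v_2,v_5,v_8,v_{11}\}$ is independent and that $M$ saturates exactly $V\smallsetminus S$ checks out).
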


\begin{proof}
Let $M=\{e_1,\dots,e_8\}$ be a maximal matching of $N$ presented in Figure~\ref{fig:nauru}. The set $S$ of 8 vertices that are not end vertices of $e_1,\dots,e_8$ is an independent set of the graph. Hence, $M$ is a maximal matching that is not perfect. Moreover, each perfect matching of the graph has exactly 12 edges, and 8 of such edges must each saturate exactly one vertex of the independent set $S$. Therefore, each perfect matching of $N$ must have at most 4 edges in $M$. So, Lemma~\ref{maxmatchings} gives the upper bound.
\end{proof}

\begin{figure}[ht]
\centering
\includegraphics[width = .3\linewidth]{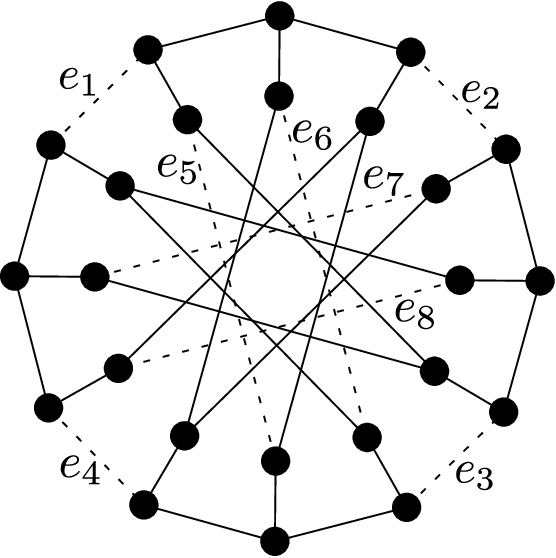}
\caption{Nauru graph $N$ with the maximal matching considered in the proof of Proposition~\ref{prop:bipartiteubNauru}.}
\label{fig:nauru}
\end{figure}

Similarly, we show that the Petersen graph $R = G(5,2)$ has $\eta(R) = 1/3$.

\begin{proposition} \label{lem:petersenub}
Let $R$ be the Petersen graph. Then, $\eta(R) = 1/3$.
\end{proposition}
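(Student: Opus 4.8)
The plan is to establish both bounds for the Petersen graph $R$. The lower bound $\eta(R) \geq 1/3$ is immediate: since $R$ is bridgeless and cubic, Lemma~\ref{lem:lb} applies directly. So the real content is the matching upper bound $\eta(R) \leq 1/3$. The natural tool is Lemma~\ref{maxmatchings}, which converts a maximal non-perfect matching $M$ with associated independent set $S = V \smallsetminus \{M\text{-saturated vertices}\}$ into the bound $\eta(G) \leq (|V| - 2|S|)/(|V| - |S|)$.

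First I would recall that the Petersen graph has $|V| = 10$. To make the Lemma give exactly $1/3$, I need to find a maximal matching $M$ whose unsaturated independent set $S$ has size exactly $|S| = 2$: plugging in gives $(10 - 4)/(10 - 2) = 6/8 = 3/4$, which is \emph{not} $1/3$. So a single application of Lemma~\ref{maxmatchings} with $|S|=2$ is too weak. To reach $1/3$ I would instead want $|S|$ as large as possible relative to $|V|$; solving $(10 - 2|S|)/(10 - |S|) = 1/3$ gives $3(10 - 2|S|) = 10 - |S|$, i.e.\ $30 - 6|S| = 10 - |S|$, so $5|S| = 20$ and $|S| = 4$. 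Thus the key combinatorial task is to exhibit a maximal matching $M$ of the Petersen graph leaving exactly four vertices unsaturated, with those four vertices forming an independent set. Such a matching has $|M| = (10 - 4)/2 = 3$ edges.

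The main step is therefore to locate, in the standard $G(5,2)$ description with outer vertices $u_0,\dots,u_4$, inner vertices $v_0,\dots,v_4$, spokes $u_iv_i$, outer cycle edges $u_iu_{i+1}$, and inner (pentagram) edges $v_iv_{i+2}$, a set of three pairwise disjoint edges whose four uncovered vertices are mutually nonadjacent, and to verify maximality (no further edge can be added, which is equivalent to the four uncovered vertices being independent). This is a finite search, and by the strong symmetry (vertex-transitivity) of the Petersen graph it should be routine to display one explicit such matching and check the four leftover vertices pairwise. Applying Lemma~\ref{maxmatchings} with $|V| = 10$ and $|S| = 4$ then yields $\eta(R) \leq (10 - 8)/(10 - 4) = 2/6 = 1/3$.

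The main obstacle I anticipate is simply the explicit verification: one must confirm both that the three chosen edges are disjoint and that the four remaining vertices span no edge of $R$ (otherwise $S$ is not independent and $M$ is not maximal). Because the Petersen graph has girth $5$ and no matching saturating eight of ten vertices can leave an independent pair in every configuration, some care is needed to pick the three edges so the leftover quadruple is independent; I would lean on the pentagram/pentagon structure, choosing edges that straddle the inner and outer cycles so the uncovered vertices are spread out. Combining $\eta(R) \geq 1/3$ from Lemma~\ref{lem:lb} with $\eta(R) \leq 1/3$ from this construction gives the claimed equality $\eta(R) = 1/3$.
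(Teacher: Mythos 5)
Your proposal follows essentially the same route as the paper's proof: the lower bound comes from Lemma~\ref{lem:lb}, and the upper bound from Lemma~\ref{maxmatchings} applied to a maximal matching of size $3$ whose four unsaturated vertices form an independent set, which is exactly the matching the paper exhibits in Figure~\ref{fig:petersen}. The one step you leave to a ``finite search'' does succeed --- in your $G(5,2)$ labeling one may take $M=\{u_0v_0,\,u_2u_3,\,v_1v_4\}$, leaving the independent set $\{u_1,u_4,v_2,v_3\}$ --- so the argument is complete and matches the paper's.
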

\begin{proof}
By Lemma~\ref{lem:lb}, $\eta(R) \geq 1/3$. We now show that $\eta(R) \leq 1/3$. Let $M=\{e_1,e_2,e_3\}$ be the maximal matching presented in Figure~\ref{fig:petersen}. The set $S$ of 4 vertices that are not end vertices of $e_1,e_2,e_3$ is an independent set of the graph. Hence, $M$ is a maximal matching that is not perfect. Moreover, each perfect matching of the graph has exactly 5 edges, and 4 of such edges must each saturate exactly one vertex of the independent set $S$. Therefore, each perfect matching of $R$ must have at most 1 edge in $M$. So, Lemma~\ref{maxmatchings} gives again the upper bound.
\end{proof}

\begin{figure}[ht]
\centering
\includegraphics[width = .2\linewidth]{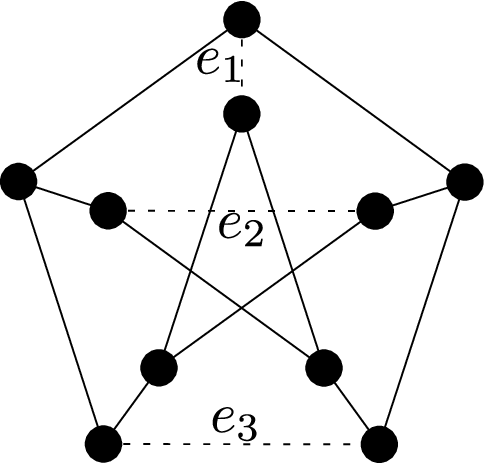}
\caption{Petersen graph $R$ with the maximal matching considered in the proof of Proposition~\ref{lem:petersenub}.} 
\label{fig:petersen}
\end{figure}

A snark is a nonplanar bridgeless cubic graph that is not Tait-colorable. A \emph{dot-product} of two cubic graphs $G$ and $H$ is any cubic graph obtained from $G\smallsetminus \{x,y\}$ and $H\smallsetminus \{aa', bb'\}$, where $x$ and $y$ are two adjacent vertices of $G$ and $aa'$ and $bb'$ is a matching of $H$. Let $a,a',b,b'$ be the four vertices of degree 2 in $H\smallsetminus \{aa', bb'\}$, and $x_1,x_2,y_1,y_2$ be the four vertices of degree 2 in $G\smallsetminus \{x,y\}$. We connect $H\smallsetminus \{aa', bb'\}$ to $G\smallsetminus \{x,y\}$ through a matching of size 4 in the resulting graph $ax_1,a'x_2,by_1,b'y_2$.

The dot-product is a famous operation for constructing infinitely many snarks, since the dot-product of two snarks is a snark. The two Blanu\v{s}a snarks $B^1$ and $B^2$ of order 18 were obtained by considering $G=H=R$, the Petersen graph~\cite{Blanusa}. Two infinite families Blanu\v{s}a First and Blanu\v{s}a Second (see Figure~\ref{fig:B1}) were subsequently defined by applying recursively the dot-product with $R$ starting respectively with $B^1$ and $B^2$~\cite{Watkins}.

\begin{proposition}\label{B1}
Let $B^1$ be the first member of the Blanu\v{s}a First family. Then, $\eta(B^1) \leq 2/5$.
\end{proposition}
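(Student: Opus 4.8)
The plan is to prove the upper bound in the same style as Propositions~\ref{prop:cube}, \ref{prop:bipartiteubNauru} and \ref{lem:petersenub}: exhibit one explicit weight function and bound the weight any perfect matching can collect. Concretely, I would fix a matching $F=\{e_1,\dots,e_5\}$ of $B^1$ consisting of five edges, set $w(e)=1$ for $e\in F$ and $w(e)=0$ otherwise. Since $F$ is itself a matching, $F$ realizes weight $5$ and no matching can do better (only these five edges carry weight), so $w(M^*(B^1))=5$. The whole statement then reduces to the combinatorial claim that \emph{every perfect matching of $B^1$ contains at most two edges of $F$}, which would give $w(P^*(B^1))\le 2$ and hence $\eta(B^1)\le 2/5$.

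The first point worth recording is that this bound \emph{cannot} be obtained by a direct application of Lemma~\ref{maxmatchings}. With $|V(B^1)|=18$, the quantity $(|V|-2|S|)/(|V|-|S|)$ equals $1/2$ when $|S|=6$ and $1/5$ when $|S|=8$, but it never equals $2/5$: the equation $(18-2|S|)/(18-|S|)=2/5$ forces the non-integral value $|S|=27/4$. So the independent-set counting of Lemma~\ref{maxmatchings} is too coarse here, and I would instead exploit the explicit structure of $B^1$ as a dot-product of two copies of the Petersen graph $R$. Recall that $B^1$ is built from $R\smallsetminus\{x,y\}$ (eight vertices, of which $x_1,x_2,y_1,y_2$ have degree two) and $R\smallsetminus\{aa',bb'\}$ (ten vertices, of which $a,a',b,b'$ have degree two), joined by the four-edge matching $ax_1,a'x_2,by_1,b'y_2$. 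These four edges form an edge cut between the two even-order sides, so any perfect matching crosses it an even number of times (zero, two, or four), and each side must otherwise be matched internally.

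With this decomposition in hand, I would choose the five marked edges so that their simultaneous use is obstructed by parity, in the spirit of Lemma~\ref{lem:bipartiteub}: deleting suitable subsets of $F$ should detach odd components, forbidding a perfect matching from taking too many of the $e_i$ at once. A cleaner variant is to transplant the Petersen gadget of Proposition~\ref{lem:petersenub} into each block — recall $R$ carries a maximal matching of three edges meeting every perfect matching in at most one edge — aiming for ``at most one marked edge per block'' and hence at most two overall. The main obstacle is exactly this structural verification: a perfect matching of $B^1$ does \emph{not} restrict to a perfect matching of either block, because the four cut vertices on each side may instead be matched across the cut. Consequently the clean ``at most one per block'' count must be checked against all admissible crossing patterns (zero, two, or four cut edges used), and it is this finite case analysis — equivalently, pinning down the right odd-cut certificate that caps the count at two — where the real work lies; the remaining steps are the same bookkeeping as in the earlier proofs.
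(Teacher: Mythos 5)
Your setup is sound and matches the paper's framing: put weight $1$ on the five edges of a matching $F$ and weight $0$ elsewhere, so that $w(M^*(B^1))=5$, and reduce $\eta(B^1)\le 2/5$ to the claim that every perfect matching of $B^1$ contains at most two edges of $F$. Your side observation is also correct and worth making: Lemma~\ref{maxmatchings} cannot yield $2/5$, since $(18-2|S|)/(18-|S|)=2/5$ forces $|S|=27/4$; indeed the paper does not invoke that lemma here, and the five-edge matching it actually uses is not even maximal (its removal leaves a matching of size two on the right-hand block). But the proof itself is missing. You never specify which five edges to take, and you explicitly defer the verification that no perfect matching contains three of them --- the step you yourself call ``where the real work lies.'' That verification is the entire content of the proposition; what you have is a plan, not a proof. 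Moreover, the one concrete mechanism you float, transplanting the three-edge Petersen gadget of Proposition~\ref{lem:petersenub} into each block to get ``at most one marked edge per block,'' you concede does not work as stated, because a perfect matching of $B^1$ need not restrict to a perfect matching of either block, and you give no repair for the crossing patterns you enumerate (zero, two, or four cut edges).

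For comparison, the paper closes exactly this gap with a short explicit case analysis tied to a concrete $M=\{e_1,\dots,e_5\}$ (Figure~\ref{fig:B1}): removing $M$ from $B^1$ isolates four vertices $v_{2,5},v_{3,5},v_{2,4},v_{3,4}$ on the left and leaves a matching of size two on the right. If a matching contains $e_1$ together with two edges of $\{e_2,e_3,e_4,e_5\}$, then one of those four vertices cannot be saturated; if a perfect matching contains three edges of $\{e_2,e_3,e_4,e_5\}$, a parity count on the right forces it to contain the fourth as well, after which the six remaining vertices on the left cannot all be saturated. This is precisely the kind of ``odd-cut certificate'' you gestured at. Your dot-product decomposition of $B^1$ and the even-crossing parity of the four-edge cut are reasonable scaffolding for discovering such a certificate, but to turn your proposal into a proof you would still need to fix the five edges and carry out the finite case analysis you postponed.
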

\begin{proof}
Let $M=\{e_1,e_2,e_3,e_4,e_5\}$ be the matching of $B^1$ shown in Figure~\ref{fig:B1}. We claim that a perfect matching of $B^1$ can contain at most two edges of $M$. So, setting $w(e_1)=w(e_2)=w(e_3)=w(e_4)=w(e_5)=1$ and all other edge weights to $0$, we have the upper bound.

Indeed, to prove the claim, note first that the removal of $M$ from $B^1$
leaves four isolated vertices on the left, say $v_{2,5}, v_{3,5}, v_{2,4} $ and $v_{3,4}$,
and a matching of size two on the right.

If a matching $M'$ of $B^1$ contains $e_1$, and two other edges of the set
$\{e_2, e_3, e_4, e_5\}$,
then one of the vertices $v_{2,5}, v_{3,5}, v_{2,4} $ and $v_{3,4}$ cannot be saturated by $M'$.
If a perfect matching of $B^1$ contains three edges of the set $\{e_2, e_3,
e_4, e_5\}$, then it has to contain the fourth edge,
as otherwise the five remaining vertices on the right cannot be all
saturated. Now, the remaining six vertices on the left cannot be all
saturated.
\end{proof}

\begin{figure}[ht]
\centering
\includegraphics[scale = .3]{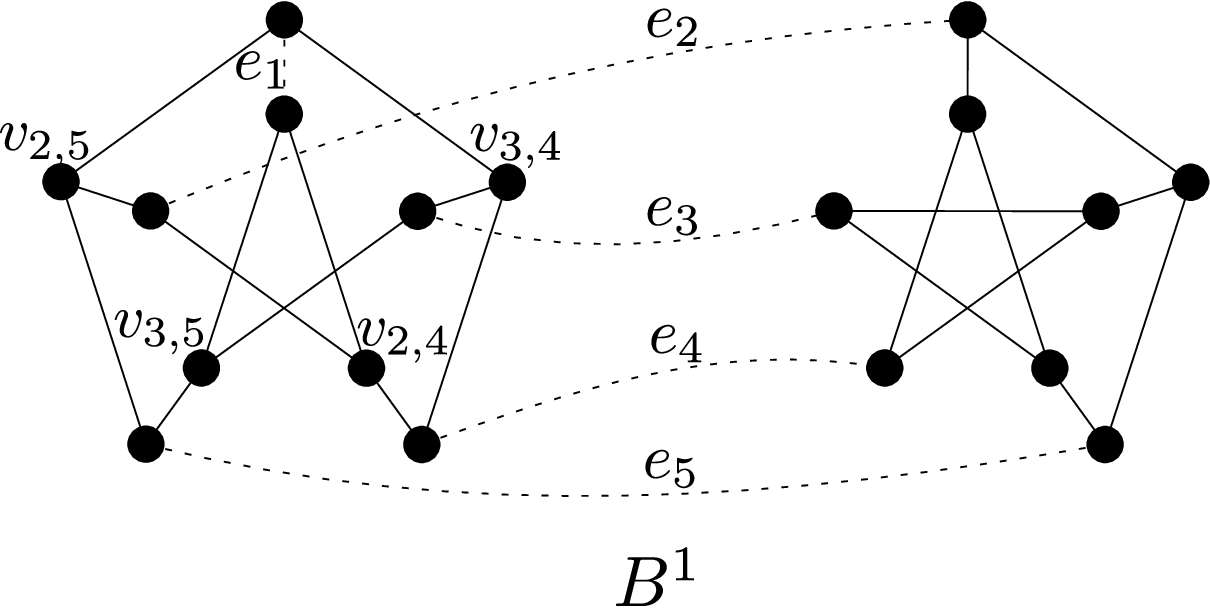} \hspace{.5cm} \includegraphics[scale=.3]{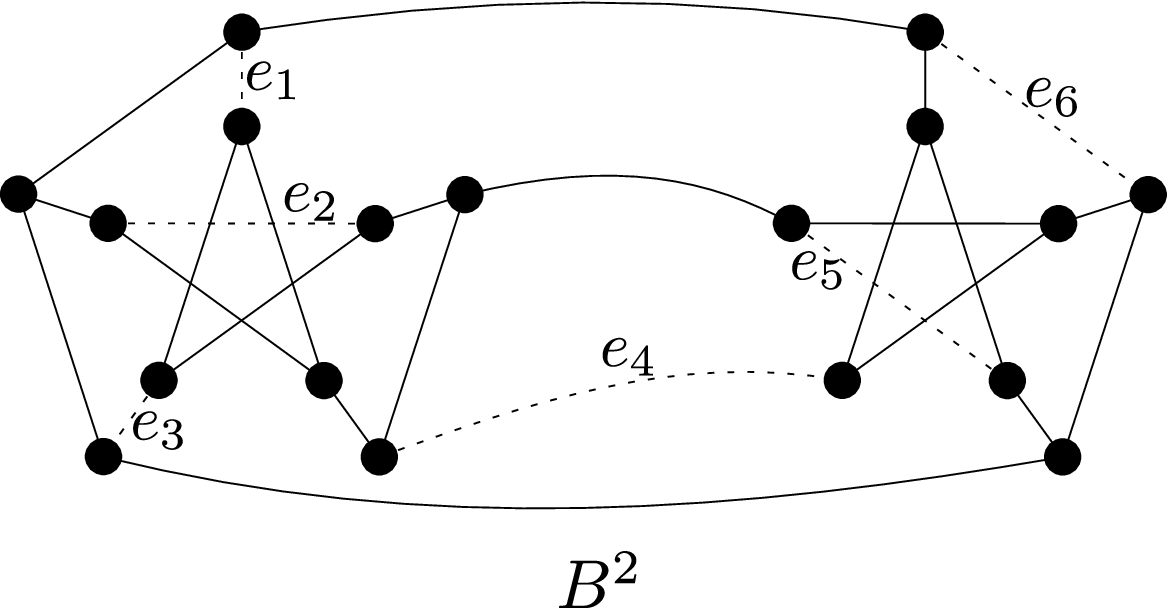}
\caption{Both Blanu\v{s}a snarks with the relevant matchings.}
\label{fig:B1}
\end{figure}

Note that a snark obtained by the dot-product of the Petersen graph (with two nonadjacent edges at distance 2 removed) and another snark (with two adjacent vertices removed) satisfies this upper bound, which applies to infinite families of snarks, including the historical Blanu\v{s}a First.

Unfortunately, we were not able to establish the same upper bound $2/5$ for the other Blanu\v{s}a snark, but Lemma~\ref{maxmatchings} provides the upper bound of $1/2$.

\begin{proposition}\label{B2}
Let $B^2$ be the first member of the Blanu\v{s}a Second family. Then, $\eta(B^2) \leq 1/2$.
\end{proposition}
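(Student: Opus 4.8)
The plan is to apply Lemma~\ref{maxmatchings} directly, exactly as in the proofs of Propositions~\ref{prop:bipartiteubNauru} and~\ref{lem:petersenub}. Since $B^2$ has $18$ vertices, to obtain the bound $\eta(B^2)\leq 1/2$ from the formula $\frac{|V|-2|S|}{|V|-|S|}$ I need a maximal matching $M$ whose unsaturated set $S$ satisfies $\frac{18-2|S|}{18-|S|}\leq \frac{1}{2}$, which rearranges to $|S|\geq 6$. So the entire task reduces to exhibiting a maximal matching of $B^2$ that leaves at least six vertices unsaturated, i.e.\ a maximal matching of size at most $6$.

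First I would fix a concrete drawing of $B^2$ (the right-hand graph in Figure~\ref{fig:B1}) and select the five-edge matching $M=\{e_1,\dots,e_5\}$ already marked there. The key verification has two parts. The first part is that $M$ is \emph{maximal}: every edge of $B^2$ must be incident to some edge of $M$, equivalently the set $S$ of vertices missed by $M$ forms an independent set. The second part is simply to count $|S|$: with $|M|=5$ the matching saturates $10$ vertices, leaving $|S|=8$, so $\frac{18-16}{18-8}=\frac{2}{10}$, which is well below $1/2$. In fact any maximal matching with $|S|\geq 6$ suffices, giving a comfortable margin; the marked matching in the figure achieves $|S|=8$.

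Concretely, the verification I would carry out is to check, vertex by vertex, that each of the eight vertices outside $M$ has all three of its neighbors saturated by $M$ (so that no edge can be added and $S$ is independent), and then invoke Lemma~\ref{maxmatchings} with these values of $|V|$ and $|S|$. Because Lemma~\ref{maxmatchings} packages the counting argument (an independent set of size $|S|$ forces every perfect matching to spend $|S|$ edges saturating $S$, hence to use at most $\frac{|V|}{2}-|S|$ edges of $M$), no further combinatorial reasoning is needed once maximality and the count are confirmed.

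The main obstacle is purely verification-bookkeeping rather than conceptual: one must confirm against the specific embedding of $B^2$ that the chosen $M$ really is maximal and really leaves an independent set of the claimed size. There is no difficulty in principle, which is consistent with the remark preceding the statement that the stronger bound $2/5$ (obtained for $B^1$ via odd-component parity) could \emph{not} be established for $B^2$ by the authors' method; the dot-product structure of $B^2$ apparently does not produce the convenient pair of odd components that drove the $B^1$ argument, so here one falls back on the weaker but readily verifiable independent-set bound of Lemma~\ref{maxmatchings}.
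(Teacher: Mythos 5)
Your overall framework is the paper's: apply Lemma~\ref{maxmatchings} to a maximal matching of $B^2$ with at least six unsaturated vertices, and your reduction to ``find a maximal matching with $|S|\geq 6$'' is correct. But your concrete instantiation fails, and not merely as a bookkeeping slip: a maximal matching of size $5$ with $|S|=8$ \emph{cannot exist} in $B^2$, or in any cubic graph on $18$ vertices. If $M$ is maximal, every vertex of $S$ has all three neighbors $M$-saturated, so there are $3|S|$ edges from $S$ into the saturated set; each saturated vertex spends one of its three edges on $M$ and so can receive at most two such edges, giving $3|S|\leq 2\cdot 2|M|=4|M|$. With $|S|=18-2|M|$ this forces $|M|\geq 54/10$, i.e.\ $|M|\geq 6$ and hence $|S|\leq 6$. (This is exactly the counting underlying Lemma~\ref{maxmatchings}, and it is why the paper's infinite family with $|M_G|=3|V(G)|/10$ is extremal.) A second sanity check you missed: your claimed values would give $\eta(B^2)\leq \frac{18-16}{18-8}=\frac{1}{5}<\frac{1}{3}$, contradicting the lower bound of Lemma~\ref{lem:lb}, which holds for every bridgeless cubic graph. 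Indeed, combining $\eta\geq 1/3$ with the formula shows $|S|\leq 6$, so $|S|=6$ is the only value that can deliver the $1/2$ bound; there is no ``comfortable margin.''

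The source of the confusion is the figure: the five-edge matching $\{e_1,\dots,e_5\}$ marked in Figure~\ref{fig:B1} belongs to $B^1$, and it is \emph{not} maximal --- as the proof of Proposition~\ref{B1} states, its removal leaves a matching of size two on the right, so its unsaturated set is not independent and Lemma~\ref{maxmatchings} does not apply to it (which is precisely why $B^1$ needed the bespoke saturation argument for the stronger $2/5$ bound). The paper's proof of this proposition instead exhibits a \emph{six}-edge maximal matching $M=\{e_1,\dots,e_6\}$ on $B^2$, whose unsaturated set $S$ of $6$ vertices is independent; then every perfect matching ($9$ edges in total) must use $6$ edges to saturate $S$, hence at most $3$ edges of $M$, and Lemma~\ref{maxmatchings} gives exactly $\frac{18-12}{18-6}=\frac{1}{2}$. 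Your proof becomes correct once you replace your $M$ by such a six-edge maximal matching and verify its maximality vertex by vertex, as you proposed to do.
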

\begin{proof}
Let $M=\{e_1,e_2,e_3,e_4,e_5,e_6\}$ be the maximal matching of $B^2$ shown in Figure~\ref{fig:B1}. The set $S$ of 6 vertices that are not end vertices of $e_1,e_2,e_3,e_4,e_5,e_6$ is an independent set of the graph. Hence, $M$ is a maximal matching that is not perfect. Moreover, each perfect matching of the graph has exactly 9 edges, and 6 of such edges must each saturate exactly one vertex of the independent set $S$. Therefore, each perfect matching of $B^2$ must have at most 3 edges in $M$. So Lemma~\ref{maxmatchings} gives the upper bound.
\end{proof}

The weaker bound obtained next applies to infinite families of snarks, including the historical Blanu\v{s}a Second.

\begin{proposition}\label{B2geral}
A snark obtained by the dot-product of the Petersen graph (with two nonadjacent edges at distance 1 removed) and another snark (with two adjacent vertices removed) satisfies the upper bound of $2/3$.
\end{proposition}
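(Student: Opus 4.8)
The plan is to follow the template of Proposition~\ref{B1}: fix a constant-size set of marked edges inside the Petersen gadget, assign them weight $1$ and all other edges weight $0$, and argue by a purely local saturation argument that every perfect matching omits at least one of them. As the target bound is $2/3$, I expect to mark exactly three edges forming a matching, so that $w(M^*)=3$ while every perfect matching has weight at most $2$.

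First I would fix coordinates on the Petersen copy $H=R=G(5,2)$: outer cycle $u_0\cdots u_4$, spokes $u_iv_i$, and inner pentagram $v_iv_{i+2}$. A pair of edges at distance $1$ is, by the girth-$5$ condition, an isometric path $P_4$, and since the Petersen graph is $3$-arc-transitive all such pairs are equivalent; so I may take the removed edges to be $aa'=u_0u_1$ and $bb'=u_2u_3$, with the surviving middle edge $a'b=u_1u_2$ witnessing the distance. The four degree-$2$ vertices $u_0,u_1,u_2,u_3$ are joined to $G\smallsetminus\{x,y\}$ by the interface edges $u_0x_1,u_1x_2,u_2y_1,u_3y_2$.

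Next I would mark the three edges $e_1=u_0v_0$ and $e_2=u_3v_3$ (the spokes at the two ends of the path) together with $e_3=u_1u_2$ (the surviving middle edge); these are pairwise disjoint, so $w(M^*)=3$. The key claim is that no perfect matching contains all three. Indeed, if a matching used $e_1,e_2,e_3$, it would saturate $u_0,u_1,u_2,u_3,v_0,v_3$ internally, so none of the four interface edges would be present; the remaining Petersen vertices $u_4,v_1,v_2,v_4$ then have no edge leaving the gadget and must be matched among themselves. But after deleting $u_0,u_1,u_2,u_3,v_0,v_3$, each of $u_4,v_1,v_2$ retains only its edge to $v_4$, so the graph induced on $\{u_4,v_1,v_2,v_4\}$ is the star $K_{1,3}$, which has no perfect matching --- a contradiction. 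Hence every perfect matching uses at most two marked edges, giving $w(P^*)\le 2$ and $\eta\le 2/3$.

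The decisive feature is that the entire argument lives inside the Petersen gadget and its four interface edges, never referencing the structure of $G$; it therefore transfers verbatim to every snark built by this dot-product, including Blanu\v{s}a Second. The step I would check most carefully is the stranding itself: confirming that $u_4,v_1,v_2,v_4$ really have all of their outside neighbours saturated, so that the $K_{1,3}$ obstruction is forced and independent of $G$. This is exactly where the distance-$1$ hypothesis enters --- the surviving edge $a'b$ lets three marked edges absorb all four degree-$2$ vertices at once and concentrate the obstruction, whereas at distance $2$ (Proposition~\ref{B1}) the same obstruction spreads over five edges and yields only $2/5$.
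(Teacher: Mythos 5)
Your proof is correct, and it takes a genuinely different route from the paper's. The paper marks a six-edge matching $M=\{e_1,\ldots,e_6\}$ of $B^2$ drawn across both halves of the dot-product (its removal leaves two isolated vertices on one side and a matching of size two on the other) and shows by a two-case saturation analysis that every perfect matching contains at most four of the six marked edges, giving $4/6=2/3$. You instead mark only three pairwise disjoint edges inside the Petersen gadget: the two end spokes $u_0v_0$, $u_3v_3$ and the surviving middle edge $u_1u_2$ of the $3$-path whose end edges $u_0u_1$, $u_2u_3$ were removed. Your stranding argument checks out against the adjacencies of $G(5,2)$: with $u_0,u_1,u_2,u_3,v_0,v_3$ saturated internally, the four interface edges are excluded, and the only surviving edges among $u_4,v_1,v_2,v_4$ are $u_4v_4$, $v_1v_4$, $v_2v_4$, a $K_{1,3}$ centered at $v_4$ with no perfect matching; the appeal to $3$-arc-transitivity legitimately reduces any distance-$1$ pair of removed edges to this configuration, so every perfect matching carries weight at most $2$ out of $3$ and $\eta\leq 2/3$ follows. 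What your approach buys is that the argument never references the second snark $G$, so the full generality asserted in Proposition~\ref{B2geral} (any snark obtained by this dot-product, not just $B^2$) is immediate, whereas the paper's proof is phrased in terms of the specific figure for $B^2$ and its transfer to the infinite family is left implicit; your certificate is also smaller (three edges and a one-case obstruction versus six edges and two cases). The paper's version, for its part, exhibits a heavier weighted instance on $B^2$ itself, but nothing in it is needed beyond what your local argument already delivers, and your closing comparison with the distance-$2$ case of Proposition~\ref{B1} correctly explains why that configuration only yields $2/5$.
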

\begin{proof}
Let $M=\{e_1,e_2,e_3,e_4,e_5,e_6\}$ be the matching of $B^2$ shown in Figure~\ref{fig:B2}. We claim that a perfect matching of $B^2$ can contain at most four edges of $M$. So, setting $w(e_1)=w(e_2)=w(e_3)=w(e_4)=w(e_5)=w(e_6)=1$ and all other edge weights to $0$, we have the upper bound.

Indeed, to prove the claim, note first that the removal of $M$ from $B^2$
leaves two isolated vertices on the left, say $a$ and $b$,
and a matching of size two on the right.

If a matching $M'$ of $B^2$ contains 5 edges of $M$ and is such that $M'$ contains $\{e_3,e_4,e_5,e_6\}$ and exactly one edge of $\{e_1, e_2\}$, 
then one of the vertices $a, b$ cannot be saturated by $M'$.
If a perfect matching of $B^2$ contains 3 edges of the set $\{e_3,e_4,e_5,e_6\}$ and also $e_1$ and $e_2$, then it has to contain the fourth edge of the set $\{e_3,e_4,e_5,e_6\}$,
as otherwise the five remaining vertices on the right cannot be all
saturated. Now, the remaining two vertices $a$ and $b$ on the left cannot be all
saturated.
\end{proof}

\begin{figure}[ht]
\centering
\includegraphics[scale = .3]{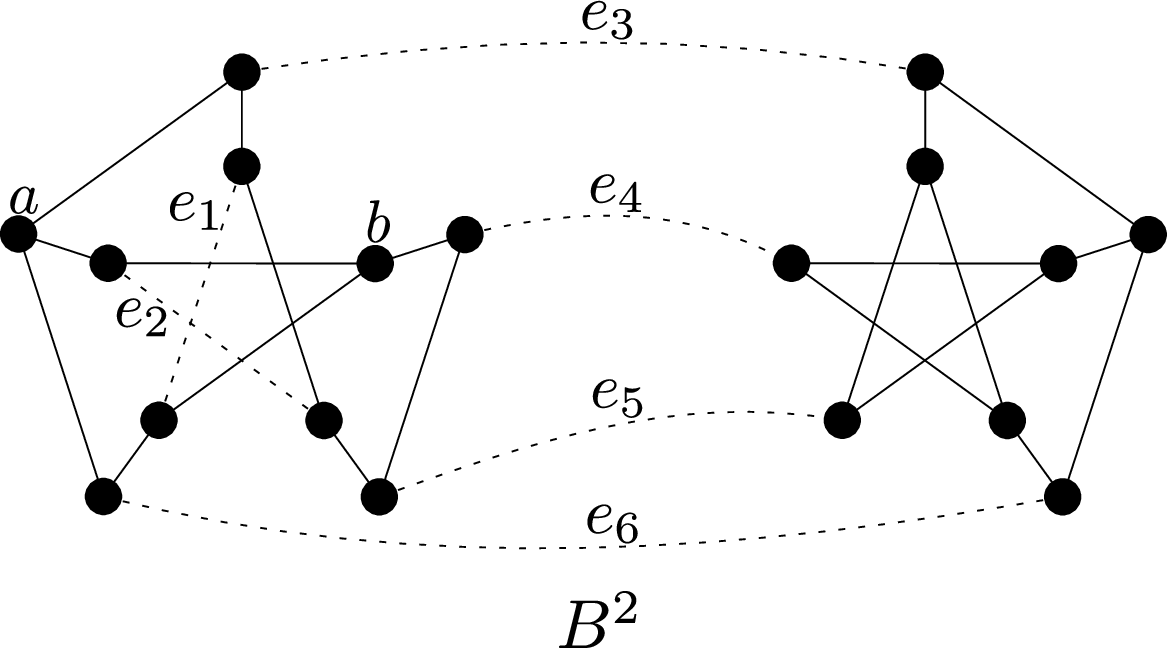}
\caption{The Blanu\v{s}a snark $B^2$ with the matching considered in Proposition~\ref{B2geral}.}
\label{fig:B2}
\end{figure}

\paragraph{A snark family obtained by the 2-construction}
Starting from the Petersen graph $R$ with its matching $M$ from Figure~\ref{fig:petersen}, it is possible to obtain an infinite family of snarks with the same upper bound of $1/3$ using a classical construction in the theory of cubic graphs~\cite{isaacs75,Sasaki}, as follows. To construct a new member $G$ of the family with a matching $M_G$, start from two already constructed graphs $H_1$ and $H_2$ with their matchings $M_{H_1}$ and $M_{H_2}$, and remove from them edges $uv \not\in M_{H_1}$ and $xy \not\in M_{H_2}$. Without loss of generality, we can assume that $u$ is $M_{H_1}$-saturated and $v$ is not, and likewise that $x$ is $M_{H_2}$-saturated and $y$ is not. Now we let $G$ be the graph obtained from these graphs by joining $u$ with $y$ and $v$ with $x$. Finally, let $M_G$ be the union of $M_{H_1}$ and $M_{H_2}$, and note that in $G$ it is also true that any edge not in $M_G$ has one end that is $M_G$-saturated and one that is not. Any graph $G$ with a matching $M_G$ obtained by this construction will have $|M_G| = 3|V(G)|/10$, and from Lemma~\ref{maxmatchings} we have $\eta(G) \leq 1/3$.

All previous bridgeless cubic graphs for which we showed an upper bound of $1/3$ are either hamiltonian or nonbipartite. Next, we show that $\eta(\NN) = 1/3$, where $\NN$ is the class of nonhamiltonian bipartite cubic graphs.

\begin{lem} \label{lem:bipartiteub}
Let $\NN$ be the class of nonhamiltonian bipartite cubic graphs. Then $\eta(\NN) = 1/3$.
\end{lem}
\begin{proof}
First, we prove that the cubic graph $G$ of Figure~\ref{fig:spbn} is nonhamiltonian. $G$ is obtained by the following construction. Let $G_1$ and $G_2$ be the two cubic graphs shown in Figure \ref{fig:spbn}. We remove vertex $v_1$ from $G_1$ and vertex $v_2$ from $G_2$. Now, let $z_0,z_1,z_2$ be the neighbors of $v_1$ in $G_1$ and $u_0,u_1,u_2$ be the neighbors of $v_2$ in $G_2$. We add all edges $z_0u_0,z_1u_1,z_2u_2$, which gives us graph $G$ which is clearly cubic and bipartite. By contradiction, suppose that $G$ is hamiltonian and let $C$ be a hamiltonian cycle. So exaclty two of the three edges $z_0u_0,z_1u_1,z_2u_2$ are in $C$. Without loss of generality, we may assume that $z_0u_0,z_1u_1$ belong to $C$. Now we obtain a hamiltonian cycle in $G_1$ by replacing the edges of $C$ with at least one endvertex in $\{u_0,u_1,u_2\}$ by the edges $z_0v_1,v_1z_1$. But this contradicts a result of \cite{asano82}, where it was shown that $G_1$ is nonhamiltonian.

\begin{figure}[ht!]
\centering
\includegraphics[scale = .43]{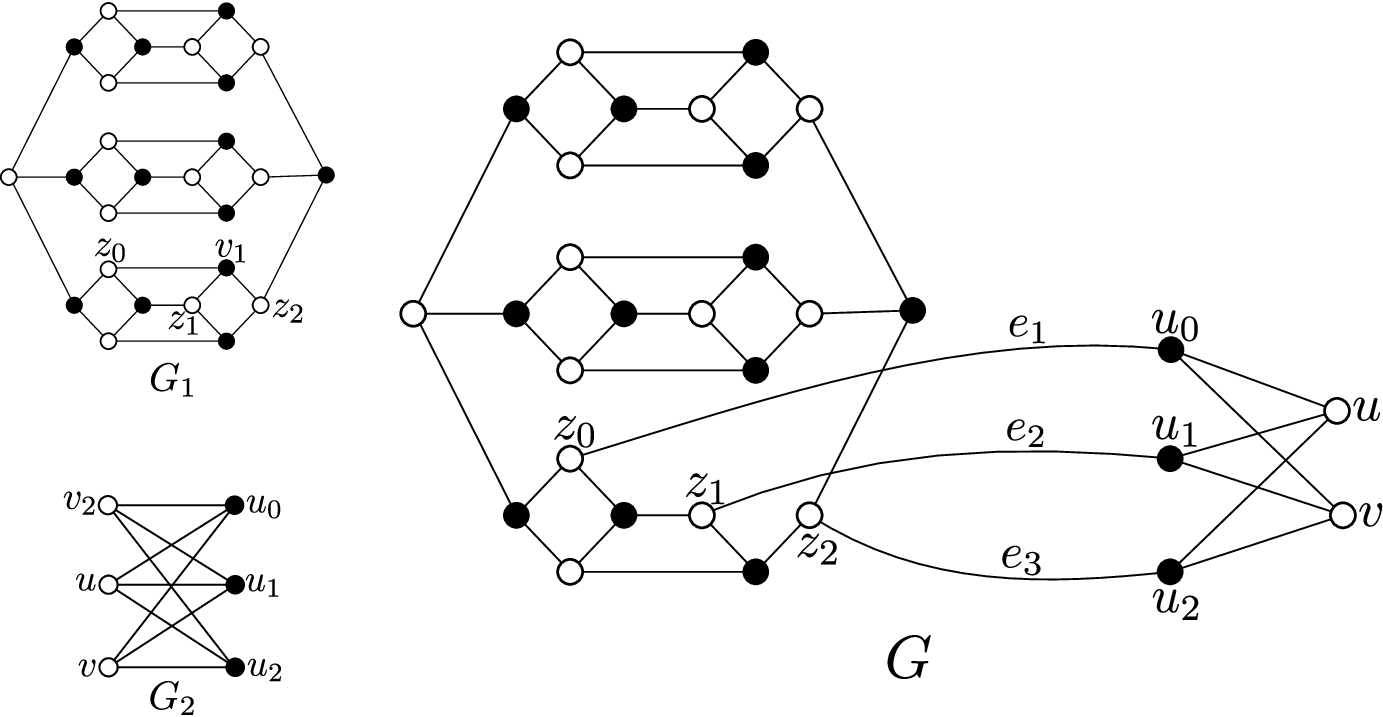}
\caption{A nonhamiltonian bipartite cubic graph $G$ obtained from $G_1$ and $G_2$ and containing no perfect matching with more than one edge among $e_1,e_2,e_3$.}
\label{fig:spbn}
\end{figure}

By Lemma~\ref{lem:lb}, $\eta(\NN) \geq 1/3$. We will show that $\eta(\NN) \leq 1/3$. Consider the nonhamiltonian bipartite cubic graph $G=(V,E)$ represented in Figure~\ref{fig:spbn} and let $e_1,e_2,e_3$ be the edges as shown in the figure.
Consider an edge weighting $w:E\rightarrow \RE^+$ such that $w(e_1)=w(e_2)=w(e_3)=1$ and $w(e)=0$, $\forall e\in E\setminus\{e_1,e_2,e_3\}$. Notice that there exists no perfect matching that contains more than one edge of $e_1,e_2,e_3$. Indeed, if a matching contains both $e_1$ and $e_2$, then the white vertices $u$ and $v$ cannot be both saturated. But clearly, there exists a perfect matching containing exactly one of the edges $e_1,e_2,e_3$. Thus, $w(P^*(G))=1$ and $w(M^*(G))=3$, and hence $\eta(G)\leq 1/3$ which implies that $\eta(\NN)\leq 1/3$.
\end{proof}

The following theorem combines the upper and lower bounds on $\eta$ for several classes of bridgeless cubic graphs, and follows immediately from the previous results. Figure~\ref{fig:diagram} illustrates the results on cubic graphs presented in this work.

\begin{thm}
The following graph classes have $\eta = 1/3$: Tait-colorable graphs, planar bridgeless cubic graphs, hamiltonian cubic graphs, generalized Petersen graphs, all members of an infinite family of snarks obtained by the 2-construction, and nonhamiltonian bipartite cubic graphs.
\end{thm}

\begin{figure}[ht!]
\centering
\includegraphics[scale = .5]{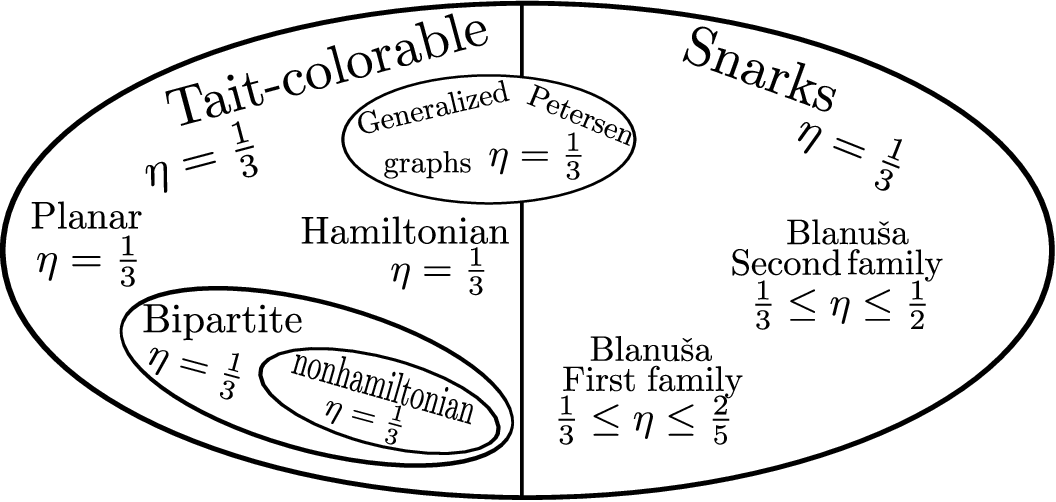}
\caption{A depiction of the results on bridgeless cubic graphs presented in this work.}
\label{fig:diagram}
\end{figure}

\section{Regular Bipartite Graphs} \label{s:regular}

A $\Delta$-regular bipartite graph is a bipartite graph such that every vertex has degree $\Delta$. In this section, we show that for all $\Delta \geq 2$, the class $\BB_\Delta$ of $\Delta$-regular bipartite graphs satisfies $\eta(\BB_\Delta) = 1/\Delta$.

\begin{thm} \label{lem:lbip}
Let $\BB_\Delta$ be the class of $\Delta$-regular bipartite graphs. Then, $\eta(\BB_\Delta) =1/\Delta$, for all $\Delta\geq 2$.
\end{thm}
\begin{proof}
We start by showing that $\eta(\BB_\Delta)\geq 1/\Delta$. It follows from K\"{o}nig's Theorem~\cite{konig}, that all $\Delta$-regular bipartite graphs are $\Delta$-edge-colorable, i.e, they admit $\Delta$ disjoint perfect matchings such that each edge is contained in exactly one perfect matching. Let $G=(V,E)$ be a $\Delta$-regular bipartite graph and let $P_1,\ldots,P_{\Delta}$ denote these $\Delta$ disjoint perfect matchings. Let $w:E\rightarrow \RE^+$ be a weighting of $E$. We may assume, without loss of generality, that $w(P_1) \geq \cdots \geq w(P_{\Delta})$. Let $M^*(G)$ be a maximum weight matching in $G$. Then, clearly
\[w(M^*(G)) \leq w(P_1) + \cdots + w(P_{\Delta}) \leq \Delta w(P_1) \leq \Delta w(P^{*}(G))\]
where $P^*(G)$ is a maximum weight perfect matching and therefore $\eta(G) \geq 1/\Delta$. Since $G$ is an arbitrary $\Delta$-regular bipartite graph, it follows that $\eta(\BB_\Delta) \geq 1/\Delta$.

Now we prove that for each $\Delta\geq 2$, there exists a $\Delta$-regular bipartite graph $G$ such that $\eta(G) \leq 1/\Delta$. Consider the $\Delta$-regular bipartite graph $G=(V,E)$ represented in Figure~\ref{fig:Reg_bip}. Let $e_1,e_2,\cdots,e_{\Delta}$ be the edges as shown in Figure~\ref{fig:Reg_bip}.
Consider a weighting $w:E\rightarrow \RE^+$ of $E$ such that $w(e_1)=\cdots=w(e_{\Delta})=1$ and $w(e)=0$, $\forall e\in E\setminus \{e_1,\cdots,e_{\Delta}\}$. We claim that there exists no perfect matching that contains more than one edge among $e_1,e_2,\cdots,e_{\Delta}$. Indeed, suppose that $P$ is a perfect matching of $G$ containing both $e_1,e_2$. In order to saturate all black vertices on the left, we need $\Delta-1$ white vertices on the left. But since $P$ contains both $e_1$ and $e_2$, only $\Delta-2$ such vertices are available. Hence $P$ cannot exist. Furthermore, there clearly exists a perfect matching containing exactly one labeled edge. Thus $w(M^*(G))=\Delta$ and $w(P^*(G))=1$ and hence $\eta(G)\leq 1/\Delta$. We conclude that $\eta(\BB_\Delta)\leq 1/\Delta$.
\end{proof}

\begin{figure}[ht]
\centering
\includegraphics[scale = .5]{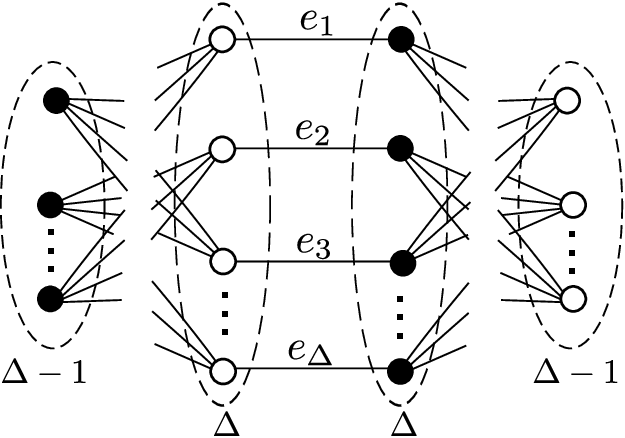}
\caption{A $\Delta$-regular bipartite graph with no perfect matching containing more than one edge among $e_1,e_2,\cdots,e_{\Delta}$.}
\label{fig:Reg_bip}
\end{figure}

\section{Conclusion}

We introduced the parameter $\eta$ to quantify the cost of perfection for matchings. The parameter was motivated by an application in computer graphics, but we believe that it is a natural and challenging parameter from a theoretical viewpoint. We characterized graphs with extreme values of $\eta$, we provided tight bounds for $\eta$ for several classes of cubic graphs, and determined the exact value of $\eta$ for the class of regular bipartite graphs.

The dual graph of a triangulation is a bridgeless cubic graph and many recent works have been devoted to quadrangulations~\cite{gopi04,remacle11,daniels11,tri2quad,lizier10a}.
The specific classes studied here are related to important triangulations for computer-graphic applications. 
For instance, hamiltonian  meshes are used to accelerate the graphics pipeline~\cite{arkin96,eppstein04}, also bipartite cubic graphs (planar or not) can be used to improve the rendering of 3D geometric models~\cite{sander08}. 
The obtained bounds aid the decision process of whether to use a perfect matching or, alternatively, use a two step quadrangulation method~\cite{tarini10}, which first obtains a maximum weight matching and then deals with the unmatched triangles.

Many open problems still remain.
We propose to extend the construction which gives $\eta = 1/3$ to the Petersen graph to other infinite families of snarks.
Another possible direction of work consists of calculating $\eta$ for cubic graphs that are dual of 4-8~meshes~\cite{velho01,velho03,velho04} (Figure~\ref{fig:48mesh}). Such meshes received a lot of attention recently~\cite{amorim12,goes08,weber07,goldenstein05}.
Furthermore, the problem of bounding $\eta$ is interesting \emph{per se}, therefore it is natural to investigate the value of $\eta$ for other graph classes whose graphs admit a perfect matching, such as grid graphs.

Finally, as suggested by a referee, an intriguing question is whether there is a polynomial algorithm to determine the value of $\eta(G)$ for a given graph $G$. In particular, can we determine in polynomial time if a given bridgeless cubic graph $G$ has $\eta(G)=1/3$?

\begin{figure}[ht]
\centering
\includegraphics[width = .5\linewidth]{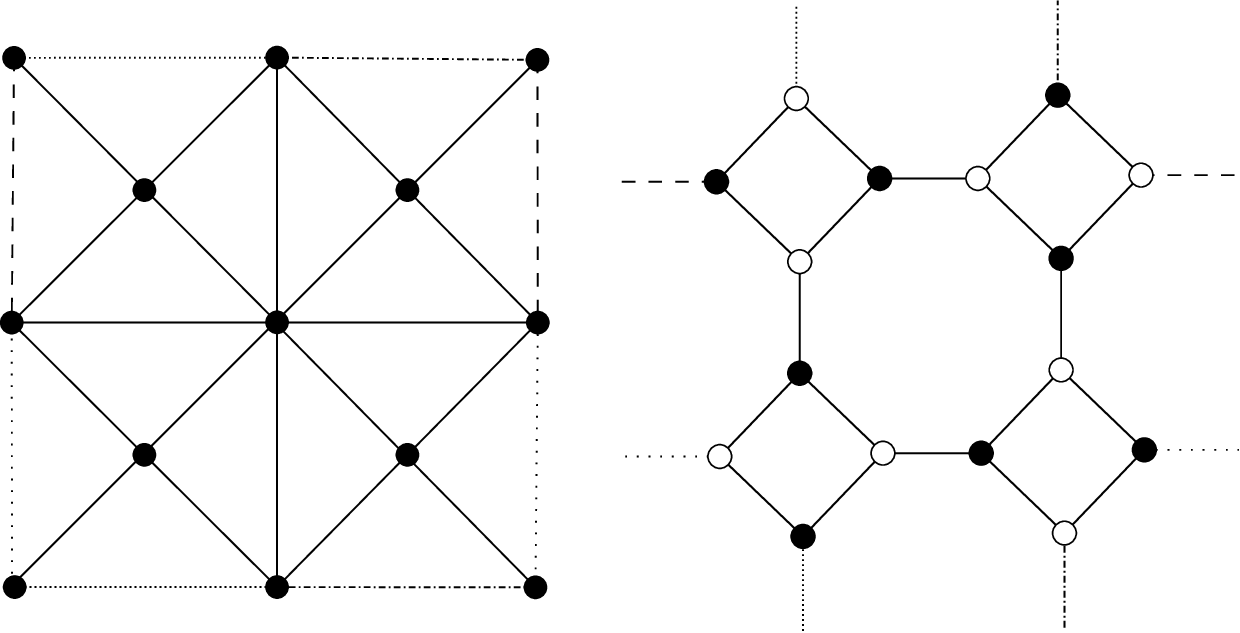}
\caption{A 4-8~mesh (left) and its dual bipartite cubic graph (right).}
\label{fig:48mesh}
\end{figure}

\section*{Acknowledgments}

The authors would like to thank Bernard Ries, Hugo Nobrega, and Fran\c{c}ois Dross for the insightful discussions,
Vahan Mkrtchyan for the proof of Lemma~\ref{lem:lb}, and Stanford Computer
Graphics Laboratory for the bunny model.

\bibliographystyle{plain}
\bibliography{graph}

\end{document}